\newtheorem{theorem}{Theorem}
\newtheorem{lemma}{Lemma}
\newtheorem{corollary}{Corollary}
\newtheorem{definition}{Definition}
\newcommand {\bq} {\mbox{\boldmath $q$}}
\newcommand {\bx} {\mbox{\boldmath $x$}}
\newcommand {\by} {\mbox{\boldmath $y$}}
\newcommand {\bW} {\mbox{\boldmath $W$}}
\newcommand{\calA}{{\cal A}}
\newcommand{\calC}{{\cal C}}
\newcommand{\calE}{{\cal E}}
\newcommand{\calP}{{\cal P}}
\newcommand{\calX}{{\cal X}}
\newcommand{\calY}{{\cal Y}}
\begin{document}

\sloppy

\title{Converse Theorems for the DMC with Mismatched Decoding} 

\author{
  Anelia Somekh-Baruch\thanks{A.\ Somekh-Baruch is with the Faculty of Engineering at Bar-Ilan University, Ramat-Gan, Israel.  Email: somekha@biu.ac.il. 
  This work was supported by the Israel Science Foundation (ISF) under grant 2013/919. 
  Some of the results of this paper were presented at the IEEE International Symposium on Information Theory (ISIT) 2015 and at the 2016 International Zurich Seminar on Communications. This paper was accepted for publication in the IEEE Transactions on Information Theory.}
}
\maketitle

\begin{abstract}
The problem of mismatched decoding with an additive metric $q$ for a discrete memoryless channel $W$ is addressed. The ``product-space" improvement of the random coding lower bound on the mismatch capacity, $C_q^{(\infty)}(W)$, was introduced by Csisz\'{a}r and Narayan. It is equal to the supremum, as $k$ tends to infinity, of the rates that are achievable by random coding for the product channel $W^k$. 

We study two kinds of decoders. The {\it $\delta$-margin mismatched decoder} outputs a message whose metric with the channel output exceeds that of all the other codewords by at least $\delta$. The {\it $\tau$-threshold decoder} outputs a single message whose metric with the channel output exceeds a threshold $\tau$. Both decoders declare an error if they fail to find a message that meets the requirement. 
It is assumed that $q$ is bounded. 

It is proved that $C_q^{(\infty)}(W)$ is equal to the mismatch capacity with a constant margin decoder.  
We next consider sequences of $P$-constant composition codebooks, whose empirical distribution of the codewords are at least $o(n^{-1/2})$ close in the $L_1$ distance sense to $P$. Using the Central Limit Theorem, it is shown that for such sequences of codebooks the supremum of achievable rates with constant threshold decoding  is upper bounded by the supremum of the achievable rates with a constant margin decoder, and therefore also by $C_q^{(\infty)}(W)$. 

Further, a soft converse is proved stating that if the average probability of error of a sequence of codebooks converges to zero sufficiently fast, the rate of the code sequence is upper bounded by $C_q^{(\infty)}(W)$. In particular, if $q$ is a bounded rational metric, and the average probability of error converges to zero faster than $O(n^{-1})$, then $R\leq C_q^{(\infty)}(W)$. Finally, a max-min multi-letter upper bound on the mismatch capacity that bears some resemblance to $C_q^{(\infty)}(W)$ is presented.

\end{abstract}
\vspace{10cm}

\pagebreak

\section{Introduction}

Mismatched decoding is a prevalent channel coding paradigm in which the decoder has a fixed structure that cannot be tailored to the actual channel in use. 
The fixed structure of the decoder may be due to practical considerations such as decoding complexity or inaccurate channel estimation. 
It is usually assumed that among all codewords, the decoding rule maximizes a certain metric between the channel output sequence and the codeword. The encoder, who knows the metric used by the decoder, needs to construct an encoding scheme that will yield a vanishingly small average probability of error. The highest achievable rate using a given decoder metric is referred to as the {\it mismatch capacity}. 
 
 Mismatched decoding has been studied extensively, especially for discrete memoryless channels (DMCs). The formula for the achievable rates for the DMC with mismatched decoding using random coding, which is referred to as the LM rate, was derived by Csisz\'{a}r and K{\"o}rner \cite{CsiszarKorner81graph}
and by Hui \cite{Hui83}. In \cite{CsiszarKorner81graph}, 
an error exponent for random coding with fixed composition codes and mismatched decoding was established using a graph decomposition theorem. Lapidoth \cite{Lapidoth96} introduced an improved lower bound on the mismatch capacity of the DMC by studying the achievable sum-rate of an appropriately chosen multiple access channel (MAC) with mismatched decoding, whose codebook was obtained by expurgating codewords from the product of the codebooks of the two users. In \cite{SomekhBaruch_mismatchachievableIT2014},\cite{SomekhBaruchISIT_2013} the achievable region and error exponents of a {\it cognitive} MAC were considered using superposition coding or random binning whose sum-rate serves as a lower bound on the capacity of the single-user channel. An improved bound was presented by Scarlett et al.\ \cite{ScarlettMartinezGuilleniFabregasISIT_2013} (see also \cite{ScarlettAlfonsoFabregasArXivNOV2013}) using a refinement of the superposition coding ensemble. For given auxiliary random variables, the results of \cite{SomekhBaruch_mismatchachievableIT2014,SomekhBaruchISIT_2013,ScarlettMartinezGuilleniFabregasISIT_2013} may yield improvement of the achievable rates of \cite{Lapidoth96} for the DMC.
For other related works and extensions see 
\cite{Balakirsky_conference_95,ShamaiKaplan1993information,MerhavKaplanLapidothShamai94,Lapidoth96b,GantiLapidothTelatar2000,ShamaiSason2002,ScarlettFabregas2012,ScarlettAlfonsoFabregasArXivNOV2013,ScarlettMartinezGuilleniFabregas_mismatch_2014_IT,ScarlettMartinezGuilleniFabregas2012AllertonSU,ScarlettPengMerhavMartinezGuilleniFabregas_mismatch_2014_IT} and references therein.

In \cite{CsiszarNarayan95}, the mismatch capacity of the DMC with a decoding metric $q$, denoted $C_q(W)$, was considered. It was shown that the LM rate is not tight in general but that its positivity is a necessary condition for positive mismatch capacity. This result was obtained by proving that the random coding bound for the product channel $W_{Y_1,...,Y_k|X_1,...,X_k}=\prod_{i=1}^k W_{Y_i|X_i}$ ($k$ consecutive channel uses of the DMC $W$), denoted $C_q^{(k)}(W)$, could result in strictly higher achievable rates. They referred to the improved bound as the "product-space" improvement of the lower bound, and the supremum of the achievable rates obtained by taking the limit of $C_q^{(k)}(W)$ as $k$ tends to infinity was denoted $C_q^{(\infty)}(W)$. 

In the special case of erasures-only (e.o.) capacity, the product space improvement $C^{(\infty)}_q(W)$ was shown to be tight. Csisz\'{a}r and Narayan conjectured that the mismatch capacity is indeed given by $C^{(\infty)}_q(W)$. Whether this conjecture is true in general remains an open question.

It was further stated in \cite{CsiszarNarayan95} that ``although the bound is not computable, its tightness would afford some valuable conclusions, for instance, that for $R<C_q(W)$, codes with $d$-decoding always exist with rates approaching $R$ and probability of error approaching zero exponentially fast". Another implication of an affirmative answer to the conjecture concerns the threshold capacity of the DMC. The threshold capacity is the supremum of achievable rates obtained by decoding the unique message which accumulates a metric that exceeds a predetermined threshold. It was stated in \cite{CsiszarNarayan95} that if the conjecture is true, the threshold capacity and the mismatch capacity of the DMC are equal.

Unlike lower bounds, upper bounds on the mismatch capacity have only been provided in some special cases. 
The only non-trivial single-letter converse result reported in \cite{Balakirsky95} for binary-input DMCs was recently disproved in \cite{ScarlettSomehkBaruchMartinezGuilleniFabregas2015}. Specifically, a rate based on superposition coding was shown to exceed the claimed mismatch capacity of \cite{Balakirsky95}.

In \cite{SomekhBaruch_general_formula_IT2015} a general formula was derived in the form of a Verd\'{u}-Han \cite{VerduHan1994} expression for the mismatch capacity of a general channel, and was defined as a sequence of conditional distributions with a general decoding metric sequence. 
Further, several upper bounds on the capacity were provided, and a simpler expression for a lower bound was derived for the case of a non-negative decoding metric. The general formula was specialized to the case of finite input and output alphabet channels with a type-dependent metric. 
The problem of threshold mismatched decoding was also studied, and a general expression for the threshold mismatch capacity was obtained.

In this paper we focus on a DMC with mismatched decoding with an additive metric, and study several properties of the ``product-space" improvement of the random coding lower bound on the mismatch capacity, $C_q^{(\infty)}(W)$.
We define a {\it mismatched decoder with a $\delta$-margin} as a decoder that decides in favor of the message whose metric with the channel output exceeds that of all the other codewords by at least $\delta$: if no such prominent codeword exists, an error is declared.  
The mismatch capacity with a $\delta$-margin is the supremum of achievable rates with mismatched $\delta$-margin decoders. 
The mismatch capacity with a constant margin decoder is the supremum over $\delta>0$ of the mismatch capacity with a $\delta$-margin. 
The word {\it constant} is used to emphasize the fact that the margin level $\delta$ is kept constant for all block lengths. 
Note that maximum likelihood (ML) decoding with a $\delta_n$-margin was studied by Lapidoth and Ziv \cite[Equation (17)]{LapidothZiv98}, where it was called a threshold decoder (see also \cite{Hashimoto1999}). Nevertheless, since threshold decoding usually refers to a decoder that outputs the single codeword whose metric exceeds a certain threshold, we chose to adopt the term ``margin decoder" as stated above to distinguish between the two terms. 

The first result presented in this paper is a proof that $C_q^{(\infty)}(W)$ is equal to the mismatch capacity with a constant margin decoder.  
The significance of the result stems from the fact that if we can prove that the capacity with a constant margin decoder of the DMC is equal to the mismatch capacity, it could be inferred that the mismatch capacity is equal to $C_q^{(\infty)}(W)$.
We also introduce the notion of constant threshold decoding, which is the supremum of achievable rates with threshold decoding with a threshold level that is kept constant for all block lengths. 

We establish an inequality between the supremum of achievable rates using constant margin decoding and constant threshold decoding, under the assumption that the sequence of codebooks is $P$-constant composition (a term that will be defined rigorously).
 We further prove a soft converse stating that if the minimum probability of error achievable for block-length $n$ converges to zero sufficiently fast, the rate of the code is upper bounded by $C_q^{(\infty)}(W)$.
In particular, if $q$ is a bounded rational metric, and the lowest achievable average probability of error, $\epsilon_n$, satisfies $\lim_{n\rightarrow\infty}n\epsilon_n=0$, then $R\leq C_q^{(\infty)}(W)$. 
Since for any $k$, at rates below $C_q^{(k)}(W)$ there exist code sequences with an average probability of error that vanishes exponentially fast, 
this implies that $C_q^{(\infty)}(W)$ is the supremum of the rates below which there exist codes with an average probability of error that vanishes exponentially fast. 

We further present an upper bound on the transmission rate: a max-min multi-letter upper bound on the mismatch capacity $C_q(W)$ is derived. 
This upper bound can be thought of as the product space improvement of the random coding {\it lower bound} with a genie-aided decoder that is informed of the joint empirical statistics of the output symbols $y_i, i=1...,n$ and their metrics with respect to the channel inputs, $q(x_i,y_i), i=1,...,n$. 
Some of the results of this paper were presented in \cite{SomekhBaruchISIT2015MultiletterMismatchedDMC}.

The remainder of this paper is as follows. In Section \ref{sc: Notation} we present some notational conventions. 
Section \ref{sc:  Preliminaries} provides a formal statement of the communication channel setup and definitions.  
In Section \ref{sc: Main Results}, 
an equality between $C_q^{(\infty)}(W)$ and the 
capacity with a constant margin decoder is established, a relationship between the achievable rate of a constant margin decoder and a constant threshold decoder is introduced, and $C_q^{(\infty)}(W)$ is characterized as the highest achievable average probability of error which decays sufficiently fast. Section \ref{sc: Max-Min Upper Bounds} is devoted to deriving a max-min upper bound on the mismatch capacity of the DMC.  
The discussion and concluding remarks appear in Section  \ref{sc: Conclusion}.

\section{Notation}\label{sc: Notation}

Throughout this paper, scalar random variables are denoted by capital letters, their sample values are denoted by their respective lower case letters, and their alphabets are denoted by their respective calligraphic letters, e.g.\ $X$, $x$, and $\calX$, respectively. A similar convention applies to random vectors of dimension $n$ and their sample values, which are 
superscripted by $n$, e.g.., $x^n$. The set of all $n$-vectors with components taking values in a certain finite alphabet are denoted by the same alphabet superscripted by n, e.g., $\calX^n$. 
Logarithms are taken to the natural base $e$. 

For a given sequence $\by \in \calY^n$, where $\calY$ is a finite alphabet,  $\hat{P}_{\by}$ denotes the empirical distribution on $\calY$ extracted from $\by$; in other words, $\hat{P}_{\by}$ is the vector $\{ \hat{P}_{\by} (y), y\in\calY\}$, where $ \hat{P}_{\by} (y)$ is the relative frequency of the symbol $y$ in the vector $\by$. The type-class of $\bx$ is the set of $\bx'\in\calX^n$ such that $\hat{P}_{\bx'}=\hat{P}_{\bx}$, which is denoted $T(\hat{P}_{\bx})$. 
Let $\calP(\calX)$ denote the set of all probability distributions on $\calX$, The set of empirical distributions of order $n$ on alphabet $\calX$ is denoted  $\calP_n(\calX)$.

Information theoretic quantities such as entropy, conditional entropy, and mutual information are denoted following the usual conventions in the information theory literature, e.g., $H (X )$, $H (X |Y )$, $I(X;Y)$ and so on. To emphasize the dependence of the quantity on a certain underlying probability distribution, say $\mu$, it is subscripted by $\mu$, e.g., with notations such as $H_\mu(X )$, $H_\mu(X |Y)$, $I_\mu(X;Y)$, etc. The expectation operator is denoted by $\mathbb{E} (\cdot)$, and once again, to make the dependence on the underlying distribution $\mu$ clear, it is denoted by $\mathbb{E}_\mu(\cdot)$. The cardinality of a finite set $\calA$ is denoted by $|\calA|$. The indicator function of an event $\calE$ is denoted by $1\{\calE \}$.

\section{Preliminaries}\label{sc: Preliminaries}

Consider a DMC with a finite input alphabet 
$\calX$ and a finite output alphabet $\calY$, which is governed by the conditional p.m.f.\ $W$. 
As the channel is fed by an input vector $x^n \in\calX^n$, it generates an output vector $y^n \in\calY^n$ according to the sequence of conditional probability distributions 
\begin{equation}P (y_i |x_1 , . . . , x_i , y_1 , . . . , y_{i-1} ) = W(y_i|x_i), \quad i = 1, 2, . . . , n\end{equation}
where for $i = 1, (y_1 , . . . , y_{i-1}) $ is understood as the null string.

A rate-$R$ block-code of length $n$ consists of $ e^{nR}$  $n$-vectors $x^n(m)$, $m = 1, 2, . . . , e^{nR}$, which represent $e^{nR}$ different messages; i.e., it is defined by the encoding function
\begin{flalign}
f_n:\; \{1,...,e^{nR}\} \rightarrow \calX^n.
\end{flalign}
It is assumed that all the possible messages are a-priori equiprobable; i.e., $P (m) = e^{-nR}$ for all $m\in\left\{1,...,e^{nR}\right\}$.  
A mismatched decoder for the channel is defined by a mapping 
\begin{flalign}\label{eq: qn mapping}
q_n:\;  \calX^n\times \calY^n\rightarrow  \mathbb{R},
\end{flalign}
where the decoder declares that message $i$ was transmitted iff 
\begin{flalign}\label{eq: decoder decision rule}
q_n(x^n(i),y^n)>q_n(x^n(j),y^n), \forall j\neq i,
\end{flalign}
and if no such $i$ exists, an error is declared. 

An important class of mismatched decoders is the class of additive decoding functions; i.e.,  
\begin{flalign}\label{eq: additive decoder decision rule}
 q_n(x^n,y^n)=\frac{1}{n}\sum_{i=1}^n q(x_i,y_i), 
\end{flalign}
where $q$ is a mapping from $\calX\times \calY$ to $\mathbb{R}$. 

The results that will be presented in this paper hold under one of the following boundedness assumptions (to be specified when needed). 

The first assumption is that $q$ is bounded as follows: there exists $B\geq 0$ such that 
\begin{flalign}\label{eq: bounded q  aaa}
|q(x,y)|\leq B<\infty,\; \forall (x,y)\in\calX\times\calY:\; W(y|x)>0.
\end{flalign}
The second assumption is slightly more restrictive: there exists $B\geq 0 $ such that 
\begin{flalign}\label{eq: boundedness updated}
q(x,y)\leq B<\infty,\;  \forall (x,y)\in\calX\times\calY \mbox{ and }  q(x,y) \geq -B>-\infty,\; \forall (x,y)\in\calX\times\calY :\; W(y|x)>0.
\end{flalign}
Note that the matched metric $q(x,y)=\log W(y|x)$ satisfies both assumptions (\ref{eq: bounded q  aaa}) and (\ref{eq: boundedness updated}).

Below are several useful definitions which refer to a general channel $W^{(n)}$ from $\calX^n$ to $\calY^n$ and to the DMC $W^{(n)}=W^n$ as a special case.
It is useful to define the average probability of error associated with a codebook, a channel and a metric:
\begin{definition}\label{eq: P_e W calC_n q_n dfn}
For a given codebook $\calC_n$, let $P_e(W^{(n)},\calC_n,q_n)$ designate the average probability of error incurred by the decoder $q_n$ (\ref{eq: decoder decision rule}) employed on the output of the channel $W^{(n)}$. 
\end{definition}
\begin{definition}\label{eq:  nMesiloncode dfn}
A code $\calC_n$ with a decoding metric $q_n$ is an $\left(n,M,\epsilon\right)$-code for the channel $W^{(n)}$ if 
it has $M$ codewords of length $n$ and $P_e(W^{(n)},\calC_n,q_n) \leq
\epsilon$. 
\end{definition}
We next define an $\epsilon$-achievable rate and the mismatch capacity. 
\begin{definition}\label{dfn: rpsilon achievable}
A rate $R>0$ is an $\epsilon$-achievable rate for the channel $\bW=\left\{W^{(n)}\right\}_{n\in \mathbb{N}}$ with decoding metric sequence $\bq=\{q_n\}_{n\in \mathbb{N}}$ if for every $\delta>0$, there exists a sequence of codes $\{\calC_n\}_{n\in \mathbb{N}}$ such that for all $n$ sufficiently large, $\calC_n$ is an $\left(n,M_n,\epsilon\right)$ code for the channel $W^{(n)}$ and decoding metric $q_n$ with rate $\frac{\log(M_n)}{n}\geq R-\delta$.  
\end{definition}
\begin{definition}
The mismatch capacity of channel $W$ with an additive decoding metric $q$, denoted $C_q(W)$, is the supremum of rates that are $\epsilon$-achievable for all $0<\epsilon<1$. \end{definition}

As mentioned in the introduction, it was proved in \cite{CsiszarKorner81graph}, \cite{Hui83} that the mismatch capacity of the DMC is lower bounded by the LM rate given by the formula
\begin{flalign}\label{eq: Cq1 dfn}
C_q^{(1)}(W)=& \max_{P_X} \underset{P_{\widetilde{Y}|X}:\; P_{\widetilde{Y}}=P_{Y}, \mathbb{E}(q(X,\widetilde{Y}))\geq \mathbb{E}(q(X,Y))  }{\min}I(X;\widetilde{Y}),
\end{flalign}
where $P_X\in\calP(\calX)$, $(X,Y)\sim P_X\times W$, $(X,\widetilde{Y})\sim P_X\times P_{\widetilde{Y}|X}$, and $P_Y,P_{\widetilde{Y}}$ are the corresponding marginal distributions of $Y,\widetilde{Y}$, respectively. By considering the achievable rate for channel $W^k$ from $\calX^k$ to $\calY^k$, the following rate is also achievable (\hspace{1sp}\cite{CsiszarNarayan95})
\begin{flalign}\label{eq: Cqk dfn}
C_q^{(k)}(W)=& \max_{P_{X^k}} \underset{P_{\widetilde{Y}^k|X^k}:\; P_{\widetilde{Y}^k}=P_{Y^k}, \mathbb{E}(q_k(X^k,\widetilde{Y}^k))\geq \mathbb{E}(q_k(X^k,Y^k))  }{\min}\frac{1}{k}I(X^k;\widetilde{Y}^k),
\end{flalign}
where $P_{X^k}\in\calP(\calX^k)$, $(X^k,Y^k)\sim P_{X^k}\times W^k$, $(X^k,\widetilde{Y}^k)\sim P_{X^k}\times P_{\widetilde{Y}^k|X^k}$, and $P_{Y^k},P_{\widetilde{Y}^k}$ are the corresponding marginal distributions of $Y^k,\widetilde{Y}^k$, respectively. Since for all $k$ the rate $C_q^{(k)}(W)$ is achievable, then so is
\begin{flalign}\label{eq: C q infty dfn}
C_q^{(\infty)}(W)=& \mbox{limsup}_{k\rightarrow\infty}C_q^{(k)}(W).
\end{flalign}

A closely related notion to that of a mismatched decoder is the $q_n$-decoder with a $\delta$-margin, which decides that $i$ is the transmitted message iff
\begin{flalign}
q_n(x^n(i),y^n)\geq q_n(x^n(j),y^n)+\delta , \forall j\neq i,\label{eq: margin decision rule 1}
\end{flalign}
and if no such $i$ exists an error is declared.

The following definition extends $P_e(W^{(n)},\calC_n,q_n)$ to the case of a margin decoder. 
\begin{definition}\label{eq: P_e W calC_n q_n tau_n dfn}
For a given codebook $\calC_n$, let $P_{margin}(W^{(n)},\calC_n,q_n,\delta)$ designate the average probability of error incurred by the $q_n$ decoder with a $\delta$-margin (\ref{eq: margin decision rule 1}) employed on the output of the channel $W^{(n)}$.
\end{definition}
An $\left(n,M,\epsilon\right)$-code and an $\epsilon$-achievable rate with respect to $\delta$-margin decoding are defined similarly to Definitions \ref{eq:  nMesiloncode dfn} and \ref{dfn: rpsilon achievable}. 
The mismatch capacity using the additive decoding metric $q_n$ (\ref{eq: additive decoder decision rule}) with a $\delta$-margin is denoted $C_{q,margin}(W,\delta)$. 
Finally, we define the mismatch capacity with a constant margin decoder as 
\begin{flalign}
C_{q,margin}^{const}(W)\triangleq \sup_{\delta>0}C_{q,margin}(W,\delta).
\end{flalign}

It is easy to see that $C_{q,margin}^{const}(W)\leq C_q(W)$ since the margin decoder is more restrictive and if the $q_n$ decoder errs, the $q_n$ decoder with a $\delta$-margin errs as well.

Another decoding rule that is related to mismatched $q_n$-decoder is the $(q_n,\tau)$-threshold decoder which decides that $i$ is the transmitted message iff
\begin{flalign}
q_n(x^n(i),y^n)\geq \tau \label{eq: threshold decision rule 1}
\end{flalign}
and
\begin{flalign}
q_n(x^n(j),y^n)<\tau,\;\forall j\neq i.\label{eq: threshold decision rule 2}
\end{flalign}
The following definition extends $P_e(W^{(n)},\calC_n,q_n)$ to the case of a threshold decoder, 
\begin{definition}\label{eq: P_e W calC_n q_n tau_n dfn}
For a given codebook $\calC_n$, let $P_{thresh}(W^{(n)},\calC_n,(q_n,\tau))$ designate the average probability of error incurred by the $(q_n,\tau)$-threshold decoder (\ref{eq: threshold decision rule 1})-(\ref{eq: threshold decision rule 2}) employed on the output of channel $W^{(n)}$.
\end{definition}
An $\left(n,M,\epsilon\right)$-code and an $\epsilon$-achievable rate with respect to $\tau$-threshold decoding are defined similarly to Definitions \ref{eq:  nMesiloncode dfn} and \ref{dfn: rpsilon achievable}. 
\begin{definition}\label{df: constant threshold capacity}
The constant threshold $q$-capacity of a channel denoted $C_{q,threshold}^{const}(W)$ is defined as the supremum over $\tau$ of the rates achievable by
codes with $(q_n,\tau)$-threshold decoders. 
\end{definition}
One has 
$
 C_{q,threshold}^{const}(W)\leq C_q(W)$,
since a threshold decoder (\ref{eq: threshold decision rule 1})-(\ref{eq: threshold decision rule 2}) is more restrictive than the mismatched decoder (\ref{eq: decoder decision rule}).

\section{Properties of $C_q^{(\infty)}(W)$}\label{sc: Main Results}

In this section we prove several properties of $C_q^{(\infty)}(W)$. The first property is that $C_q^{(\infty)}(W)$ is equal to $C_{q,margin}^{const}(W)$ and the second is that $C_q^{(\infty)}(W)$ is the highest achievable rate for an average probability of error $P_e(W^{(n)},\calC_n,q_n)$  which vanishes sufficiently fast. 
Further, we consider sequences of $P$-constant composition codebooks, such that the empirical distribution of the codewords converges to $P$ (at least $o(n^{-1/2})$-fast). It is shown 
that the supremum of the achievable rates with constant threshold decoding for such sequences of codebooks is upper bounded by the supremum of the achievable rates with a constant margin decoder, and therefore also by $C_q^{(\infty)}(W)$. 

Before proving these properties we recall that 
the dual expression for the $n$-letter product-form expression of the LM rate (\hspace{1sp}\cite{MerhavKaplanLapidothShamai94},\cite{GantiLapidothTelatar2000}, \cite{ShamaiSason2002}) in the case of an additive metric can be expressed as
\begin{flalign}\label{eq: DUAL EXPRESSION MODIFIED}
C_q^{(n)}(W)=& \frac{1}{n}\max_{P^{(n)}\in\calP(\calX^n)}\sup_{s_n\geq 0, a(x^n)\geq 0} \mathbb{E}\left(\log\frac{e^{s_nq_n(X^n,Y^n)+a(X^n)}}{\sum_{\bar{x}^n}{P}^{(n)}(\bar{x}^n)e^{s_nq_n(\bar{x}^n,Y^n)+a(\bar{x}^n)}}\right),
\end{flalign}
where $(X^n,Y^n)\sim P^{(n)}\times W^n$. 
The following lemma will be useful in lower bounding $C_q^{(n)}(W)$ and therefore also $C_q^{(\infty)}(W)$.
\begin{lemma}\label{lm: essential lemma}
Let $q$ be bounded as in \eqref{eq: boundedness updated}, let $\calA\subseteq \calX^n\times\calY^n$, fix $s_n\geq 0$, and fix a subset $\calC_n\subseteq \calX^n$ of size $|\calC_n|=e^{nR_n}$. Let $P^{(n)}$ denote the uniform distribution over $\calC_n$, and let $(X^n,Y^n)\sim P^{(n)}\times W^n$, then
\begin{flalign}\label{eq: main lemma result}
&\sup_{ a(x^n)\geq 0}\mathbb{E}\left(\log\frac{e^{s_nq_n(X^n,Y^n)+a(X^n)}}{\sum_{\bar{x}^n}{P}^{(n)}(\bar{x}^n)e^{s_nq_n(\bar{x}^n,Y^n)+a(\bar{x}^n)}}\right)\nonumber\\
&\geq - \Pr\left\{\calA\right\}\cdot \frac{s_n\cdot 2B}{n}  \nonumber\\
&\quad + \Pr\left\{\calA^c\right\}\cdot  \mathbb{E}\left( \min\left\{R_n, \frac{s_n}{n}\cdot \min_{\bar{x}^n\in\calC_n:\; \bar{x}^n\neq X^n }\left[ q_n(X^n,Y^n)-q_n(\bar{x}^n,Y^n)\right]  \right\}\bigg |\calA^c\right)-\frac{1}{n}\log(2),
\end{flalign}
where $\Pr\left\{\calA\right\}=\Pr\left\{(X^n,Y^n)\in \calA\right\}$, and $B$ is the metric bound in \eqref{eq: boundedness updated}.
\end{lemma}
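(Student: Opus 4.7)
The plan is to dispatch the supremum over $a(\cdot)\geq 0$ by simply using the admissible choice $a\equiv 0$, and then reduce the resulting log-ratio to a ``decoding margin'' random variable whose negative/positive regimes correspond precisely to the two pieces of the right-hand side. Since $P^{(n)}$ is uniform on $\calC_n$, $P^{(n)}(\bar{x}^n)=e^{-nR_n}\mathbf{1}\{\bar{x}^n\in\calC_n\}$, and dividing numerator and denominator inside the logarithm by $e^{s_nq_n(X^n,Y^n)}$ gives
\begin{equation*}
\log\frac{e^{s_nq_n(X^n,Y^n)}}{\sum_{\bar{x}^n}P^{(n)}(\bar{x}^n)e^{s_nq_n(\bar{x}^n,Y^n)}}\;=\;nR_n-\log\Bigl(1+\!\!\sum_{\bar{x}^n\in\calC_n,\,\bar{x}^n\neq X^n}\!\!e^{-s_n[q_n(X^n,Y^n)-q_n(\bar{x}^n,Y^n)]}\Bigr).
\end{equation*}

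Next, I would introduce the decoding margin $\Delta(X^n,Y^n)\triangleq \min_{\bar{x}^n\in\calC_n,\,\bar{x}^n\neq X^n}[q_n(X^n,Y^n)-q_n(\bar{x}^n,Y^n)]$. Each summand is bounded by $e^{-s_n\Delta}$ and there are fewer than $e^{nR_n}$ of them, so the sum does not exceed $e^{nR_n-s_n\Delta}$. The elementary inequality $\log(1+e^z)\leq\log 2+\max(0,z)$ then yields the pointwise bound
\begin{equation*}
nR_n-\log\bigl(1+e^{nR_n-s_n\Delta(X^n,Y^n)}\bigr)\;\geq\;-\log 2+n\cdot\min\bigl\{R_n,\,\tfrac{s_n}{n}\Delta(X^n,Y^n)\bigr\},
\end{equation*}
which is the per-letter form appearing in the statement (with the $\log 2$ slack becoming $-\tfrac{1}{n}\log 2$ after the customary $1/n$ normalization of \eqref{eq: DUAL EXPRESSION MODIFIED}).

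The last step is to take the expectation and split on $\calA$. On $\calA^c$ I keep $\min\{R_n,\tfrac{s_n}{n}\Delta\}$ intact, producing the second summand on the right-hand side, weighted by $\Pr\{\calA^c\}$ and conditioned on $\calA^c$. On $\calA$ I invoke the boundedness assumption \eqref{eq: boundedness updated}: since $(X^n,Y^n)\sim P^{(n)}\times W^n$, for every $i$ the pair $(X_i,Y_i)$ satisfies $W(Y_i|X_i)>0$ almost surely, hence $q_n(X^n,Y^n)\geq -B$; and $q_n(\bar{x}^n,Y^n)\leq B$ holds without any support restriction on $\bar{x}^n$. Therefore $\Delta\geq -2B$ almost surely, and since $R_n,s_n\geq 0$ we obtain $\min\{R_n,\tfrac{s_n}{n}\Delta\}\geq -\tfrac{2Bs_n}{n}$, producing the crude first term $-\Pr\{\calA\}\cdot\tfrac{2Bs_n}{n}$. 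Combining through the law of total expectation gives the claimed inequality.

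The main obstacle is producing the two-regime $\min\{\cdot,\cdot\}$ structure in a uniformly tight fashion: one needs a bound on $\log(1+\sum)$ that is nearly sharp both when the margin is large (so the inner sum is small and the bound should approach $nR_n$) and when the margin is small or negative (so the exponential dominates and the bound should approach $s_n\Delta$). The inequality $\log(1+e^z)\leq\log 2+z^+$ achieves this at a cost of $\log 2$, which is exactly the origin of the $-\tfrac{1}{n}\log 2$ slack. A secondary subtlety is that the crude bound on $\calA$ genuinely requires the two-sided form \eqref{eq: boundedness updated} rather than \eqref{eq: bounded q  aaa}: the latter says nothing about $q(\bar{x}_i,Y_i)$ when $W(Y_i|\bar{x}_i)=0$, which is precisely why the stronger hypothesis is imposed.
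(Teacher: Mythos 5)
Your proof is correct and follows essentially the same route as the paper's. Both arguments set $a\equiv 0$, rewrite the log-ratio using the uniformity of $P^{(n)}$ over $\calC_n$ to isolate the term $nR_n$ plus a $\log$ of the sum of pairwise exponential margins, collapse that sum to the two quantities $e^{-nR_n}$ and $e^{-s_n\Delta}$ with $\Delta=\min_{\bar{x}^n\neq X^n}[q_n(X^n,Y^n)-q_n(\bar{x}^n,Y^n)]$, and then apply an elementary two-point inequality to produce the $\min\{R_n,(s_n/n)\Delta\}$ structure at a cost of $\log 2$: your $\log(1+e^z)\le\log 2+z^+$ is exactly the paper's $e^{-a}+e^{-b}\le 2e^{-\min\{a,b\}}$ under the substitution $z=b-a$. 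The crude bound on $\calA$ also agrees: your $\Delta\ge-2B$ is precisely the paper's $q_{\max}(Y^n)-q_n(X^n,Y^n)\le 2B$, and your observation that this is exactly where the two-sided form \eqref{eq: boundedness updated} is needed (and \eqref{eq: bounded q  aaa} would not suffice, since $\bar{x}^n$ ranges over all of $\calC_n$ without any support restriction) is accurate. The only structural difference is that the paper first conditions on $\calA$ versus $\calA^c$ and then bounds the two conditional expectations separately (so the $-\log 2/n$ slack is incurred only in the $\calA^c$ branch), whereas you establish one uniform pointwise inequality and then split via total expectation; this is cosmetic and yields a bound that is, if anything, marginally weaker (unconditional $-\log 2$ rather than $-\Pr\{\calA^c\}\log 2$) but still suffices for \eqref{eq: main lemma result}. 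Finally, you correctly infer that the left side of \eqref{eq: main lemma result} carries an implicit $1/n$ normalization — this is evident from how the lemma is actually invoked with $C_q^{(n)}(W)=\frac{1}{n}\sup_{s_n,a(\cdot)}\mathbb{E}(\cdot)$ in the proofs of Theorems~\ref{th: Cinfinity equals C margin} and~\ref{th: rational theorem} — and your derivation matches that intended reading; the absence of that factor in the printed lemma and the way it is re-introduced in \eqref{eq: recall that 1} is a notational slip in the paper rather than a defect in your argument.
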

 Lemma \ref{lm: essential lemma} will be used in the proofs of Theorems \ref{th: Cinfinity equals C margin} and \ref{th: rational theorem}, with different substitutions for the event $\calA$. 
 In Theorem \ref{th: Cinfinity equals C margin}, $\calA^c$ will stand for the event of $\delta$-margin correct decoding, under which the term $\min_{\bar{x}^n\in\calC_n:\; \bar{x}^n\neq X^n }\left[ q_n(X^n,Y^n)-q_n(\bar{x}^n,Y^n)\right] $ exceeds $\delta$. 
 For Theorem \ref{th: rational theorem}, $\calA^c$ will signify the event of correct mismatched decoding, under which the same term can be lower bounded by the minimum value of $q_n(x^n,y^n)-q_n(\bar{x}^n,y^n)$ among all possible triplets of sequences $(x^n,\bar{x}^n,y^n)$. 
 
 We next prove Lemma \ref{lm: essential lemma}.
\begin{proof}
First note that choosing $a(x^n)=0$ for all $x^n$ rather than taking the supermum over $a(x^n)\geq 0$, we obtain the lower bound 
\begin{flalign}\label{eq: recall that 1}
&\sup_{ a(x^n)\geq 0}\mathbb{E}\left(\log\frac{e^{s_nq_n(X^n,Y^n)+a(X^n)}}{\sum_{\bar{x}^n}{P}^{(n)}(\bar{x}^n)e^{s_nq_n(\bar{x}^n,Y^n)+a(\bar{x}^n)}}\right)\nonumber\\
\geq &  \frac{1}{n}\mathbb{E}\left(\log\frac{e^{s_nq_n(X^n,Y^n)}}{\sum_{\bar{x}^n}{P}^{(n)}(\bar{x}^n)e^{s_nq_n(\bar{x}^n,Y^n)}}\right)\nonumber\\
=& \Pr\left\{\calA\right\}\cdot  \frac{1}{n}\mathbb{E}\left(\log\frac{e^{s_nq_n(X^n,Y^n)}}{\sum_{\bar{x}^n}{P}^{(n)}(\bar{x}^n)e^{s_nq_n(\bar{x}^n,Y^n)}}\big|\calA \right)+Pr\left\{\calA^c\right\}\cdot  \frac{1}{n}\mathbb{E}\left(\log\frac{e^{s_nq_n(X^n,Y^n)}}{\sum_{\bar{x}^n}{P}^{(n)}(\bar{x}^n)e^{s_nq_n(\bar{x}^n,Y^n)}}\big|\calA^c\right) 
,
\end{flalign}
where the last step follows from the law of total probability. Now, define
\begin{flalign}
&q_{\max}(Y^n)\triangleq \max_{x^n\in\calC_n}q_n(x^n,Y^n),\label{eq: qmx dfn}\\
&Z_n\triangleq  q_{\max}(Y^n)-q_n(X^n,Y^n)\label{eq: delta definition}
\end{flalign} 
and note that the boundedness assumption \eqref{eq: boundedness updated} implies that $Z_n\leq 2B$. Therefore,
\begin{flalign}
\log\frac{e^{s_nq_n(X^n,Y^n)}}{\sum_{\bar{x}^n}{P}^{(n)}(\bar{x}^n)e^{s_nq_n(\bar{x}^n,Y^n)}}\geq 
\log\frac{e^{s_nq_n(X^n,Y^n)}}{e^{s_nq_{\max}(Y^n)}}=s_n\left(q_n(X^n,Y^n)-q_{\max}(Y^n)\right)\geq -s_n\cdot 2B,
\end{flalign}
and consequently we can lower bound the first conditional expectation in (\ref{eq: recall that 1}) as follows
\begin{flalign}\label{eq: reviewers 2}
&  \mathbb{E}\left(\frac{1}{n}\left. \log\frac{e^{s_nq_n(X^n,Y^n)}}{\sum_{\bar{x}^n}{P}^{(n)}(\bar{x}^n)e^{s_nq_n(\bar{x}^n,Y^n)}}\right|\calA \right)\geq -\frac{s_n\cdot 2B}{n} .
\end{flalign}
As for the second conditional expectation in (\ref{eq: recall that 1}), 
\begin{flalign}\label{eq: reviewers 3}
&  \mathbb{E}\left(\frac{1}{n}\left. \log\frac{e^{s_nq_n(X^n,Y^n)}}{\sum_{\bar{x}^n}{P}^{(n)}(\bar{x}^n)
e^{s_nq_n(\bar{x}^n,Y^n)}} \right|\calA^c 
\right)\nonumber\\
=&    -\mathbb{E}\left(\frac{1}{n}\left. \log\left(P^{(n)}(X^n)+ \sum_{\bar{x}^n\in\calC_n:\; \bar{x}^n\neq X^n }{P}^{(n)}(\bar{x}^n) e^{-s_n\left[ q_n(\bar{x}^n,Y^n)- q_n(X^n,Y^n)\right]}\right)  \right|\calA^c \right)\nonumber\\
\stackrel{(*)}{=}&    -\mathbb{E}\left(\frac{1}{n}\left. \log\left(e^{-nR_n}+ \sum_{\bar{x}^n\in\calC_n:\; \bar{x}^n\neq X^n }{P}^{(n)}(\bar{x}^n) e^{-s_n\left[ q_n(\bar{x}^n,Y^n)- q_n(X^n,Y^n)\right]}\right)  \right|\calA^c \right)\nonumber\\
\geq&    -\mathbb{E}\left(\frac{1}{n}\left. \log\left(e^{-nR_n}+ e^{-s_n\min_{\bar{x}^n\in\calC_n:\; \bar{x}^n\neq X^n }\left[ q_n(\bar{x}^n,Y^n)- q_n(X^n,Y^n)\right]} \right)  \right|\calA^c \right)\nonumber\\
\geq &  \mathbb{E}\left( \min\left\{R_n, \frac{s_n}{n}\cdot \min_{\bar{x}^n\in\calC_n:\; \bar{x}^n\neq X^n }\left[ q_n(X^n,Y^n)-q_n(\bar{x}^n,Y^n)\right]  \right\}\bigg |\calA^c\right)-\frac{1}{n}\log(2) ,
\end{flalign}
where $(*)$ follows since $P^{(n)}$ is uniform over $\calC_n$, and the last step follows since $e^{-a}+e^{-b}\leq 2e^{-\min\{a,b\}}$.

\end{proof}

\subsection{An Equality Between $C_q^{(\infty)}$ and the Capacity with Margin Decoding}
The next theorem states that $C_q^{(\infty)}$ is equal to $C_{q,margin}^{const}(W)$, and implies that the following two statements are equivalent:
(a) $C_q(W)=C_q^{(\infty)}(W)$ and 
(b) $C_q(W)=C_{q,margin}^{const}(W)$.

\begin{theorem}\label{th: Cinfinity equals C margin}
For a metric $q$ which satisfies the boundedness assumption (\ref{eq: boundedness updated}) 
one has 
\begin{flalign}
C_q^{(\infty)}(W)=C_{q,margin}^{const}(W).
\end{flalign}
The theorem is established by proving the two inequalities, $C_q^{(\infty)}(W)\geq C_{q,margin}^{const}(W)$ and $C_q^{(\infty)}(W)\leq C_{q,margin}^{const}(W)$. 
The proof of the latter is based on Lemma \ref{lm: essential lemma} where the event $\calA$ in \eqref{eq: main lemma result} is chosen as the error event in margin decoding and with a proper choice of parameter $s_n$. 
 
\end{theorem}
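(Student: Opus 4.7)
I split the equality into its two inequalities. The achievability $C_q^{(\infty)}(W)\leq C_{q,margin}^{const}(W)$ follows from a direct random-coding-with-tilt argument, while the converse $C_q^{(\infty)}(W)\geq C_{q,margin}^{const}(W)$ is where Lemma~\ref{lm: essential lemma} is invoked in the natural way.

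For the converse, I fix $R<C_{q,margin}^{const}(W)$; by definition there exist $\delta>0$ and codes $\calC_n$ with rates $R_n\to R$ and $\delta$-margin error probabilities $\epsilon_n\to 0$. I then apply Lemma~\ref{lm: essential lemma} with $P^{(n)}$ the uniform distribution on $\calC_n$, with $\calA$ the $\delta$-margin error event, and with the tilting parameter $s_n=nR_n/\delta$. On $\calA^c$ every competing codeword lags the transmitted one by at least $\delta$, so $(s_n/n)\cdot\min_{\bar{x}^n\neq X^n}[q_n(X^n,Y^n)-q_n(\bar{x}^n,Y^n)]\geq R_n$, and the inner $\min\{R_n,\cdot\}$ in (\ref{eq: main lemma result}) collapses to $R_n$. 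Combining with the dual representation (\ref{eq: DUAL EXPRESSION MODIFIED}) specialized to this $P^{(n)}$, $s_n$, and the trivial choice $a\equiv 0$, I obtain
\begin{equation*}
C_q^{(n)}(W)\geq (1-\epsilon_n)R_n - \epsilon_n\cdot \frac{2BR_n}{\delta} - \frac{\log 2}{n}.
\end{equation*}
Letting $n\to\infty$ yields $C_q^{(\infty)}(W)\geq R$, which gives the desired inequality.

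For achievability, I fix $R<C_q^{(\infty)}(W)$ and choose $n$ with $R<C_q^{(n)}(W)$. Applying the LM random-coding theorem to $W^n$ viewed as a single super-channel (with the maximizer of (\ref{eq: Cqk dfn})) produces a code sequence of super-block length $N$ whose $q_n$-decoder error decays exponentially in $N$. Bounding each pairwise near-tie event $\{q_n(m')\geq q_n(m)-\delta\}$ by the Chernoff tilt $\exp(s[q_n(m')-q_n(m)+\delta])$ inflates the standard Gallager bound by a factor $e^{s\delta}$ per super-symbol, degrading the exponent by at most $s\delta$, which remains positive for $\delta>0$ small enough. Hence $R\leq C_{q,margin}(W,\delta)\leq C_{q,margin}^{const}(W)$.

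The main obstacle lies in the converse step: $s_n$ must be large enough for the metric-gap term on $\calA^c$ to saturate $R_n$ (forcing $s_n\geq nR_n/\delta$), yet the penalty $\Pr(\calA)\cdot s_n\cdot 2B/n$ grows linearly in $s_n$. The choice $s_n=nR_n/\delta$ balances these precisely, and the balance is viable only because $\delta$ is a fixed positive constant; this is why the quantity characterized is $C_{q,margin}^{const}$ rather than any vanishing-margin variant.
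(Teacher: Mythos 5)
Your converse direction ($C_q^{(\infty)}(W)\geq C_{q,margin}^{const}(W)$) is essentially the paper's proof verbatim: Lemma \ref{lm: essential lemma} with $\calA$ taken to be the $\delta$-margin error event, $s_n=nR_n/\delta$, the identical resulting bound $C_q^{(n)}(W)\geq (1-\epsilon_n)R_n-\epsilon_n\cdot 2BR_n/\delta-\log 2/n$, and the same closing observation (the paper records it in a footnote) that the balance is only possible because $\delta$ is held constant. For the other direction you depart from the paper. The paper writes the random-coding error probability for a constant-composition ensemble with a $\delta$-margin decoder explicitly (equation (\ref{eq: tandom coding threshold})), analyzes it by the method of types to obtain the rate $C_{q,margin}^{(1)}(W,\delta)$ of (\ref{eq: Cq1 thresh dfn}), where the margin simply relaxes the constraint to $\mathbb{E}(q(X,\widetilde{Y}))\geq \mathbb{E}(q(X,Y))-\delta$, and then uses the convexity-based continuity of that expression in $\delta$ to pass to $\delta\downarrow 0$ and recover $C_q^{(1)}(W)$; the same argument run on each $W^k$ yields the bound by $C_q^{(\infty)}(W)$. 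You instead take the exponential decay of the ordinary mismatch-rule error at $R<C_q^{(n)}(W)$ as known, and argue via Chernoff tilting that the near-tie events $\{q_n(m')\geq q_n(m)-\delta\}$ only cost an additive $s\delta$ in the exponent, which stays positive for small $\delta$. Both routes are sound for bounded $q$; the paper's version is self-contained in the primal (type-counting) domain and makes the $\delta$-dependence of the achievable rate explicit, while yours is a bit shorter but implicitly leans on the optimal Chernoff parameter $s$ being finite (which does hold under (\ref{eq: boundedness updated})) and is somewhat telegraphic about the bookkeeping needed to distribute a single block-level margin $\delta$ over the $N$ super-symbols.
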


\begin{proof}
\noindent \underline{Proof of $C_q^{(\infty)}(W)\leq C_{q,margin}^{const}(W)$}: 

\vspace{0.25cm}

The average probability of error associated with a random coding constant composition scheme within the type-class $T(P_X)$, and a $\delta$-margin decoder can be expressed as
\begin{flalign}\label{eq: tandom coding threshold}
\bar{P}_e=& 
\sum_{x^n\in T(P_X), y^n}\frac{1}{|T(P_X)|} W^n(y^n|x^n)\left[1 -\left(1-\sum_{\bar{x}^n\in T(P_X):\; q_n(\bar{x}^n,y^n)\geq q_n(x^n,y^n)-\delta} \frac{1}{|T(P_X)|} \right)^{M_n}\right] ,
\end{flalign}
where $1- \bar{P}_e$ stands for the probability of drawing a transmitted codeword whose metric with the channel output 
exceeds that of all the other independently drawn codewords by at least $\delta$.  
Optimizing $P_X$ and 
using standard method of types tools and the weak law of large numbers, one can deduce that for all $\epsilon>0$, the rate $C_{q,margin}^{(1)}(W,\delta)-\epsilon$ is achievable with a $\delta$-margin decoder, where 
\begin{flalign}\label{eq: Cq1 thresh dfn}
C_{q,margin}^{(1)}(W,\delta)=& \max_{P_X} \underset{P_{\widetilde{Y}|X}:\; P_{\widetilde{Y}}=P_{Y}, \mathbb{E}(q(X,\widetilde{Y}))\geq \mathbb{E}(q(X,Y))-\delta  }{\min}I(X;\widetilde{Y}),
\end{flalign}
with $P_X\in\calP(\calX)$, $(X,Y)\sim P_X\times W$, $(X,\widetilde{Y})\sim P_X\times P_{\widetilde{Y}|X}$, and where $P_Y,P_{\widetilde{Y}}$ are the corresponding marginal distributions of $Y,\widetilde{Y}$, respectively. Now, using the continuity of $C_{q,margin}^{(1)}(W,\delta)$ w.r.t.\ $\delta$ (which follows from its convexity, see e.g., \cite[Appendix VI]{CsiszarNarayan95}), we obtain $\lim_{\delta\downarrow 0} C_{q,margin}^{(1)}(W,\delta)= C_q^{(1)}(W)$, and as a result, the $\delta$-margin capacity is lower bounded by $C_q^{(1)}(W)$ and therefore also by $C_q^{(\infty)}(W)$. Consequently we obtain $C_{q,margin}^{const}(W)=\sup_{\delta>0}C_{q,margin}(W,\delta)\geq C_q^{(\infty)}(W)$.

\vspace{0.5cm}

\noindent\underline{Proof of $C_q^{(\infty)}(W)\geq C_{q,margin}^{const}(W)$}: 

\vspace{0.25cm}

Fix $\delta$ and a vanishing sequence $\epsilon_n,n\geq1$. 
Let $\calP_U(\epsilon_n,\delta)$ stand for the set of $P^{(n)}\in\calP(\calX^n)$ which are uniform over a subset $\calC_n$ of $\calX^n$ and such that $P_{margin}(W^{(n)},\calC_n,q_n,\delta)= \epsilon_n$; i.e., 
\begin{flalign}\label{eq: calPUTPX dfn}
&\calP_U(\epsilon_n,\delta)\nonumber\\
\triangleq& \left\{
P^{(n)}\in\calP(\calX^n):\; 
P^{(n)}(\tilde{x}^n) = P^{(n)}(x^n) \mbox{ if }P^{(n)}(x^n)\cdot P^{(n)}(\tilde{x}^n)>0, 
P_{margin}(W^{(n)},\calC_n,q_n,\delta)= \epsilon_n\right\},
\end{flalign}
where $\calC_n$ stands for the support of $P^{(n)}$.

Since $\calP_U(\epsilon_n,\delta)\subseteq\calP(\calX^n)$, 
\begin{flalign}
\label{eq: recall that 111}
C_q^{(n)}(W)\geq &  \frac{1}{n}\max_{P^{(n)}\in\calP_U(\epsilon_n,\delta)}\sup_{s_n\geq 0} \mathbb{E}\left(\log\frac{e^{s_nq_n(X^n,Y^n)}}{\sum_{\bar{x}^n}{P}^{(n)}(\bar{x}^n)e^{s_nq_n(\bar{x}^n,Y^n)}}\right)
.
\end{flalign}

Let $P^{(n)}\in \calP_U(\epsilon_n,\delta)$ be given, and denote its support by $\calC_n$. 
Next, let $\calE$ denote the error event 
\begin{flalign}
\calE=\left\{(x^n,y^n):\; 
q_n(x^n,y^n)<  \max_{\tilde{x}^n\in\calC_n :\; \tilde{x}^n\neq x^n}  q_n(\tilde{x}^n,y^n)+\delta\right\},
\end{flalign}
and note that $\Pr(\calE)= P_{margin}(W^{(n)},\calC_n,q_n,\delta)= \epsilon_n$. 

Now recall (\ref{eq: DUAL EXPRESSION MODIFIED}), we obtain for $X^n\sim P^{(n)}$ that is uniform over a subset $\calC_n\in \calX^n$ of size $|\calC_n|=e^{R_n}$, 
\begin{flalign}
&C_q^{(n)}(W)\nonumber\\
=& \frac{1}{n}\sup_{s_n\geq 0, a(x^n)\geq 0} \mathbb{E}\left(\log\frac{e^{s_nq_n(X^n,Y^n)+a(X^n)}}{\sum_{\bar{x}^n}{P}^{(n)}(\bar{x}^n)e^{s_nq_n(\bar{x}^n,Y^n)+a(\bar{x}^n)}}\right)\nonumber\\
\geq &\sup_{s_n\geq 0}- \Pr\left\{\calE\right\}\cdot \frac{s_n\cdot 2B}{n}  + \Pr\left\{\calE^c\right\}\cdot  \mathbb{E}\left( \min\left\{R_n, \frac{s_n}{n}\cdot \min_{\bar{x}^n\in\calC_n:\; \bar{x}^n\neq X^n }\left[ q_n(X^n,Y^n)-q_n(\bar{x}^n,Y^n)\right]  \right\}\bigg |\calE^c\right)\nonumber\\
&\quad-\frac{1}{n}\log(2)\nonumber\\
\geq& \sup_{s_n\geq 0} -\frac{\epsilon_ns_n\cdot 2B}{n}  +(1-\epsilon_n)\cdot \min\left\{R_n,\frac{s_n\delta}{n}\right\} -\frac{1}{n} \log2 
,
\end{flalign}
where the first inequality follows from (\ref{eq: main lemma result}) by substituting $\calA\Rightarrow \calE$ and the second in equality follows since under $\calE^c$, we have $\min_{\bar{x}^n\in\calC_n:\; \bar{x}^n\neq X^n }\left[ q_n(X^n,Y^n)-q_n(\bar{x}^n,Y^n)\right] \geq \delta$.

Taking\footnote{We choose $s_n=\frac{nR_n}{\delta}$ 
to balance the effect of the two terms $-\frac{\epsilon_n s_n\cdot 2B}{n}$ and $\min \{R_n,\frac{s_n\delta}{n}\}$, and to obtain an asymptotic lower bound of $R_n$. Any choice smaller than $nR_n/\delta$ would result in the second term being strictly smaller than $R_n$.} $s_n=\frac{nR_n}{\delta}$, 
it follows that
\begin{flalign}
\label{eq: recall that 111b}
C_q^{(n)}(W)\geq &\max_{P^{(n)}\in\calP_U(\epsilon_n,\delta)}
 -\frac{\epsilon_n\cdot 2B R_n}{\delta} +(1-\epsilon_n)\cdot R_n -\frac{1}{n} \log2.
\end{flalign}

Now, since this inequality holds for every $\delta>0$ and every non-negative vanishing sequence $\epsilon_n,n\geq 1$, we obtain
\begin{flalign}\label{eq: threshthresh 2}
C_q^{(\infty)}(W)\geq  
&\sup_{ \substack{\delta>0, \{\epsilon_n\}_{n\geq 1}:\; \epsilon_n\geq 0 ,\\  \lim_{n\rightarrow\infty} \epsilon_n=0 }}\limsup_{n\rightarrow\infty} \max_{P^{(n)}\in \calP_U(\epsilon_n,\delta)}   -\frac{\epsilon_n\cdot 2B R_n}{\delta} + (1-\epsilon_n)R_n\nonumber\\
=& C_{q,margin}^{const}(W),\end{flalign}
where last step follows since the supremum of $R$ over $\delta$ such that there exists a sequence of codebooks $\calC_n,n\geq 1$ with $\frac{1}{n}\log |\calC_n|\geq R$ and which incur $\lim_{n\rightarrow\infty}\epsilon_n=0$ is nothing but $C_{q,margin}^{const}(W)$. 
This concludes the proof of the inequality $C_q^{(\infty)}(W)\geq C_{q,margin}^{const}(W)$ and Theorem \ref{th: Cinfinity equals C margin} follows. 
\end{proof}

A special case of a decoding metric is the erasures-only metric
\begin{flalign}
q_{e.o.}(x,y)=1\{W(y|x)>0\}.
\end{flalign}
This metric has the property that no undetected errors can occur, since by definition the transmitted codeword always accumulates a $1$-value metric, which is the highest possible. 
Since in the erasures-only metric case, it has been shown (\hspace{1sp}{\cite{CsiszarNarayan95}}) that the mismatch capacity is equal to the threshold capacity and to $C_q^{(\infty)}(W)$, we deduce from Theorem \ref{th: Cinfinity equals C margin} that in this case the mismatch capacity with a constant margin decoder is equal to the mismatch capacity.

Note that the proof of Theorem \ref{th: Cinfinity equals C margin} implies that taking $a(x^n)=0$ in (\ref{eq: DUAL EXPRESSION MODIFIED}) and maximizing over $P^{(n)}$ incurs a loss of rate (compared to the optimal value of non-constant $a(x^n)$) which vanishes as $n$ tends to infinity. 

 To see this, let  \begin{flalign}\label{eq: DUAL EXPRESSION MODIFIED with a0}
\widetilde{C}_q^{(n)}(W)=& \frac{1}{n}\max_{P^{(n)}\in\calP(\calX^n)}\sup_{s_n\geq 0} \mathbb{E}\left(\log\frac{e^{s_nq_n(X^n,Y^n)}}{\sum_{\bar{x}^n}{P}^{(n)}(\bar{x}^n)e^{s_nq_n(\bar{x}^n,Y^n))}}\right),
\end{flalign}
and observe that the first step (\ref{eq: recall that 1}) of the proof is to lower bound $C_q^{(n)}(W)$ by $\widetilde{C}_q^{(n)}(W)$, so the same proof actually gives: 
\begin{flalign}
C_q^{(\infty)}(W)\geq \widetilde{C}_q^{(\infty)}(W)\geq C_{q,margin}^{const}(W).
\end{flalign}
On the other hand, it is also proven that $C_q^{(\infty)}(W)\leq C_{q,margin}^{const}(W)$ so we have:
\begin{flalign}\label{eq: Sandwich}
C_q^{(\infty)}(W)\geq \widetilde{C}_q^{(\infty)}(W)\geq C_{q,margin}^{const}(W)\geq C_q^{(\infty)}(W).
\end{flalign}
This sandwich argument shows that $C_q^{(\infty)}(W)= \widetilde{C}_q^{(\infty)}(W)$. 
 Indeed, for finite $n$ there is usually a strict inequality $\widetilde{C}_q^{(n)}(W)< C_q^{(n)}(W)$, as it is known that the generalized mutual information (GMI) \cite{MerhavKaplanLapidothShamai94} can be strictly smaller than the LM rate.
 Nevertheless, in the limit as $n$ tends to infinity, they are equal. The explanation we provide is that without loss of asymptotic optimality the maximizing $P^{(n)}$ (in the dual expression (\ref{eq: DUAL EXPRESSION MODIFIED})) can be replaced by a uniform distribution over a subset of a single type-class in $\calX^n$, denoted $\calC$. Asymptotically, this may cancel the need to assign different $a(\cdot)$ values (in the dual expression (\ref{eq: DUAL EXPRESSION MODIFIED})) to different sequences $x^n$, at least in the bounded metric case.

\subsection{A Relationship Between Constant Margin decoding and Constant Threshold Decoding}\label{sc: Threshold-Margin relationship} 

In this section we establish a connection between the capacity with constant threshold decoding and decoding with a constant margin decoder. 

\begin{definition}
Let $P\in\calP(\calX)$ be given. 
A sequence of constant-composition codebooks $\calC_n, n=1,2,...$ is said to be {\it $P$-constant composition}, if for all $n$, the codewords lie in a single type-class which corresponds to an empirical distribution $\hat{P}_n$ that is $o(n^{-1/2})$ close in the $L_1$ distance sense to $P$; i.e.,
\begin{flalign}\label{eq: convergence rate of Phat}
\limsup_{n\rightarrow \infty }\frac{ \|\hat{P}_n-P\|_1}{n^{-1/2} }=\limsup_{n\rightarrow \infty }\frac{\sum_{ x\in\calX} |\hat{P}_n(x)-P(x)|}{n^{-1/2} } =0. \end{flalign}
\end{definition}

We next define the constant threshold capacity and the constant margin capacity which are constrained to sequences of $P$-constant composition codes. 
\begin{definition}
The $(P,q)$-threshold capacity, denoted $C_{q,thresh}^{const}(P,W)$ is the supremum of achievable rates using sequences of $P$-constant composition codes, metic $q$, and a constant threshold decoder. \end{definition}

\begin{definition}
The $(P,q)$-margin capacity, denoted $C_{q,margin}^{const}(P,W)$ is the supremum of achievable rates using sequences of $P$-constant composition codes, metic $q$, and a constant margin decoder. \end{definition}

\begin{theorem}\label{th: thresh-margin theorem}
If $P\in\calP(\calX)$ is such that $\forall x\in\calX, P(x)>0$, $q$ is bounded (\ref{eq: bounded q  aaa}), and there exists at least one symbol $x^*$ which satisfies $\max_{y,y'\in\calY }W(y|x^*)q(x^*,y)-W(y'|x^* ) q(x^*,y')\neq 0$, 
then\begin{flalign}
C_{q,threshold}^{const}(P,W)\leq C_{q,margin}^{const}(P,W)\leq C_q^{(\infty)}(W).\label{eq: thresh margin c infty}
\end{flalign}
\end{theorem}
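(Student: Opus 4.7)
The theorem splits into two inequalities. The second, $C_{q,margin}^{const}(P,W)\leq C_q^{(\infty)}(W)$, is immediate: every $P$-constant composition code is a code, so $C_{q,margin}^{const}(P,W)\leq C_{q,margin}^{const}(W)$, and Theorem~\ref{th: Cinfinity equals C margin} identifies the latter with $C_q^{(\infty)}(W)$. The real content is the first inequality, which I will prove by bootstrapping a $\delta$-margin decoder from any $\tau$-threshold decoder on the same codebook sequence, paying the gap via a central-limit-theorem argument. Concretely, fix $R<C_{q,threshold}^{const}(P,W)$; by definition there exist $\tau\in\mathbb{R}$ and a sequence of $P$-constant composition codebooks $\calC_n$ of rate at least $R$ with $P_{thresh}(W^n,\calC_n,(q_n,\tau))\to 0$.

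Introduce $\mu_P:=\mathbb{E}_{P\times W}[q(X,Y)]$ and $\sigma_P^2:=\mathrm{Var}_{P\times W}(q(X,Y))$. The hypothesis on the symbol $x^*$, combined with $P(x^*)>0$, is what secures $\sigma_P^2>0$, by positivity of the conditional variance of $q(x^*,Y)$ under $W(\cdot|x^*)$. For any $x^n\in T(\hat P_n)$ with $Y^n\sim W^n(\cdot|x^n)$, the metric $nq_n(x^n,Y^n)=\sum_{i=1}^n q(x_i,Y_i)$ is a sum of independent bounded random variables with mean $n\mu_{\hat{P}_n}$ and variance $n\sigma_{\hat{P}_n}^2\to n\sigma_P^2$. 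Since $\|\hat{P}_n-P\|_1=o(n^{-1/2})$ yields $\sqrt{n}\,(\mu_{\hat{P}_n}-\mu_P)\to 0$, the Lindeberg CLT gives, under the uniform law on $\calC_n$,
\[
\sqrt{n}\,\bigl(q_n(X^n,Y^n)-\mu_P\bigr)\;\Rightarrow\;\mathcal{N}(0,\sigma_P^2).
\]
Vanishing threshold error forces $\Pr\{q_n(X^n,Y^n)<\tau\}\to 0$, whereas the display gives a positive liminf for this probability whenever $\tau\geq\mu_P$; hence $\tau<\mu_P$ strictly.

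Set $\delta:=(\mu_P-\tau)/2>0$. The decisive set inclusion is
\[
\{\text{$\delta$-margin correct on $m$}\}\;\supseteq\;\{\text{$\tau$-threshold correct on $m$}\}\cap\{q_n(x^n(m),Y^n)\geq\tau+\delta\},
\]
because on the right-hand event every competitor $j\neq m$ satisfies $q_n(x^n(j),Y^n)<\tau\leq q_n(x^n(m),Y^n)-\delta$. Averaging over messages and passing to complements,
\[
P_{margin}(W^n,\calC_n,q_n,\delta)\leq P_{thresh}(W^n,\calC_n,(q_n,\tau))+\Pr\{q_n(X^n,Y^n)<\tau+\delta\}.
\]
The first summand vanishes by hypothesis; the second vanishes because $\tau+\delta=(\tau+\mu_P)/2<\mu_P$ (here even the weak law of large numbers suffices). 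Thus $\calC_n$ achieves vanishing $\delta$-margin error at rate $R$, yielding $R\leq C_{q,margin}(P,W,\delta)\leq C_{q,margin}^{const}(P,W)$, and letting $R\uparrow C_{q,threshold}^{const}(P,W)$ finishes the proof.

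The main technical obstacle is the CLT bookkeeping: one must combine the deterministic drift $\mu_{\hat{P}_n}-\mu_P=O(\|\hat{P}_n-P\|_1)=o(n^{-1/2})$ with the CLT fluctuations and verify Lindeberg's condition uniformly in $n$ across admissible sequences $x^n$, which is routine given the boundedness of $q$. The role of the symbol $x^*$ is solely to secure $\sigma_P^2>0$; without it one could have $\tau=\mu_P$ consistent with vanishing threshold error, and the critical gap $\delta=(\mu_P-\tau)/2$ would collapse to zero, killing the bootstrap.
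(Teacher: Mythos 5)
Your argument tracks the paper's proof closely: you establish (via the CLT and the $o(n^{-1/2})$ composition condition) that any viable constant threshold $\tau$ must lie strictly below $\mu_P=\mathbb{E}_{P\times W}[q(X,Y)]$, fix $\delta=(\mu_P-\tau)/2$, prove the inequality $P_{margin}(W^n,\calC_n,q_n,\delta)\leq P_{thresh}(W^n,\calC_n,(q_n,\tau))+\Pr\{q_n(X^n,Y^n)<\tau+\delta\}$ --- which is precisely the paper's Lemma~\ref{cl: Claim 2} --- and finish with the weak law of large numbers, exactly as in the paper. One cosmetic slip: for constant-composition codebooks the per-symbol contributions $q(x_i,Y_i)$ carry no randomness in $X$, so the CLT limiting variance is the averaged conditional variance $\sum_{x}P(x)\,\mathrm{Var}_{W(\cdot|x)}(q(x,Y))$ rather than the marginal $\mathrm{Var}_{P\times W}(q(X,Y))$; since your $x^*$ hypothesis secures positivity of the former, the argument is unaffected.
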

Note that the more general opposite inequality $C_{q,threshold}^{const}(W)\geq C_q^{(\infty)}(W)$ has been known \cite{CsiszarNarayan95} and is proved similarly to the proof of the claim 
$C_q^{(\infty)}(W)\leq C_{q,margin}^{const}(W)$ (replacing $q_n(x^n,y^n)$ in (\ref{eq: tandom coding threshold}) by $\mathbb{E}(q(X,Y))-\tau$). 

It would have been desirable to answer the more general question whether $C_{q,thresh}^{const}(W)\leq C_q^{(\infty)}(W)$ or not. 
Clearly, this result would follow if $C_{q,thresh}^{const}(W)=\sup_{P\in\calP(\calX):\;\forall x\in\calX, P(x)>0} C_{q,threshold}^{const}(P,W)$. 
It is important to note that the requirement that the sequence of codebooks be constant-composition does not limit the generality\footnote{The generality is not limited since the number of type-classes grows polynomially with $n$, whereas the number of codewords grows exponentially with $n$.}. The requirement that might limit the generalization of the result is (\ref{eq: convergence rate of Phat}); that the empirical distribution converges to some $P$ at a rate which is $o(n^{-1/2})$.   
 The mismatch capacity as well as the $\delta$-margin capacity are insensitive to small fluctuations in the type class in which the codewords lie. 
 This is because the decision regions and the corresponding error probabilities dictated by (\ref{eq: decoder decision rule}) and 
(\ref{eq: margin decision rule 1}) do not change if each codeword of a codebook is concatenated by a constant identical string of symbols 
 of length $f_n$ where $\lim_{n\rightarrow\infty}f_n/n=0$, this simply adds the same quantity to both sides of the equations (\ref{eq: decoder decision rule}) and 
(\ref{eq: margin decision rule 1}). The threshold capacity with a given threshold sequence $\tau_n$ is more sensitive to such fluctuations, but we conjecture that this issue is minor and does not have a significant effect on the generality of the results.  

The requirements on $P$ in Theorem \ref{th: thresh-margin theorem} are needed to invoke the Central Limit Theorem (CLT) which gives a strictly positive non-vanishing lower bound on $\Pr\left\{ q_n(X^n,Y^n)<\mathbb{E}_{\hat{P}_n\times W}(q(X,Y))-kn^{-1/2}\right\}$. This forces the threshold decoder to use a threshold level that is lower than $\mathbb{E}_{\hat{P}_n\times W}(q(X,Y))-kn^{-1/2}$. 

As mentioned above, if $q=q_{e.o}$ the theorem holds, hence we can assume $q\neq q_{e.o}$. The condition that there exists at least one symbol $x^*$ which satisfies $\max_{y,y'\in\calY }W(y|x^*)q(x^*,y)-W(y'|x^* ) q(x^*,y')\neq 0$ is not restrictive because it is not hard to realize that the result $C_q^{(\infty)}(W)=C_q(W)$ for $q=q_{e.o}$ (\hspace{1sp}\cite{CsiszarNarayan95}) extends to the case of a metric which satisfies $\max_{y,y'\in\calY }|W(y|x)q(x,y)-W(y'|x) q(x,y')|=0,\forall x$ as well for constant composition codes. 

We now prove Theorem \ref{th: thresh-margin theorem}. 

\begin{proof}
The right-most inequality in (\ref{eq: thresh margin c infty}) is a direct consequence of Theorem \ref{th: Cinfinity equals C margin}. 
The rest of the proof relies on the following two lemmas. 
\begin{lemma}\label{cl: Claim 1}
 Let $P\in\calP(\calX)$  which satisfies the conditions of Theorem \ref{th: thresh-margin theorem} and a sequence $\calC_n, n=1,2,...$ of $P$-constant composition codes be given where $\calC_n\subseteq T(\hat{P}_n)$. Let $X^n$ denote the random codeword which is uniformly distributed over $\calC_n$, and let $Y^n$ be the corresponding output of the channel $W$, then  
\begin{flalign}\label{eq: claim 1 equation}
\liminf_{n\rightarrow\infty}\Pr\left\{ q_n(X^n,Y^n)<\mathbb{E}_{\hat{P}_n\times W}(q(X,Y))-kn^{-1/2}\right\}> 0, \forall 0<k<\infty. 
\end{flalign}
\end{lemma}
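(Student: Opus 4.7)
The plan is to apply the Central Limit Theorem (CLT) to the sum $\sum_{i=1}^n q(X_i, Y_i)$ conditionally on $X^n$, exploiting the fact that this conditional law depends on $X^n$ only through the type of its components.

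First I would fix any $x^n \in T(\hat{P}_n)$ and observe that the memoryless channel law makes $Y_1, \ldots, Y_n$ independent with $Y_i \sim W(\cdot|x_i)$, so $\{q(x_i, Y_i)\}_{i=1}^n$ is a sequence of independent random variables bounded by $B$ in absolute value by \eqref{eq: bounded q  aaa}. Writing $\mu_x \triangleq \sum_y W(y|x) q(x, y)$ and $V_x \triangleq \mathrm{Var}_{W(\cdot|x)}(q(x, Y))$, the conditional mean of $q_n(X^n, Y^n)$ is $\sum_x \hat{P}_n(x) \mu_x = \mathbb{E}_{\hat{P}_n \times W}(q(X, Y))$, and the conditional variance of $\sqrt{n}\, q_n(X^n, Y^n)$ is $V_n \triangleq \sum_x \hat{P}_n(x) V_x$. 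Since both of these quantities depend on $x^n$ only through $\hat{P}_n$, the conditional probability in \eqref{eq: claim 1 equation} takes the same value for every $x^n \in T(\hat{P}_n) \supseteq \calC_n$, so averaging over $X^n$ uniform on $\calC_n$ leaves the answer unchanged.

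Next I would use the hypotheses of Theorem \ref{th: thresh-margin theorem} to prove $\liminf_n V_n > 0$. Since $P(x)>0$ for all $x\in\calX$ and $\hat{P}_n\to P$ by \eqref{eq: convergence rate of Phat}, we have $\hat{P}_n(x^*) \to P(x^*)>0$; the distinguished-symbol assumption forces $q(x^*,\cdot)$ to be non-degenerate under $W(\cdot|x^*)$, so $V_{x^*}>0$. Hence $V_n \geq \hat{P}_n(x^*) V_{x^*} \to P(x^*) V_{x^*} =: V > 0$. The event in \eqref{eq: claim 1 equation} then rewrites, after centering and scaling by $\sqrt{nV_n}$, as
\[
\left\{ \frac{1}{\sqrt{n V_n}} \sum_{i=1}^n \bigl(q(X_i, Y_i) - \mu_{X_i}\bigr) \;<\; -\frac{k}{\sqrt{V_n}} \right\}.
\]
Because the summands are uniformly bounded by $2B$ while $\sqrt{nV_n}\to\infty$, Lindeberg's condition is trivial, and the Lindeberg--Feller CLT (applied conditionally on $X^n$) yields convergence in distribution of the standardized partial sum to $\mathcal{N}(0,1)$. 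Combined with $V_n\to V$, the probability in \eqref{eq: claim 1 equation} converges to $\Phi(-k/\sqrt{V})>0$, which establishes the claim.

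The main obstacle is securing the strictly positive asymptotic lower bound on $V_n$; this is precisely where the full-support requirement on $P$ and the distinguished-symbol condition on $x^*$ in Theorem \ref{th: thresh-margin theorem} are essential, as the conclusion would fail outright if $q(x,\cdot)$ were $W(\cdot|x)$-degenerate for every $x$ in the support of $P$. Once $V_n$ is bounded away from zero, the remaining CLT step is routine thanks to the boundedness of $q$.
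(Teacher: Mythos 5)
Your argument is correct in essence and targets the same underlying idea as the paper's proof: condition on $X^n = x^n$ (exploiting that the conditional law of $q_n(X^n,Y^n)$ depends on $x^n$ only through the fixed type $\hat{P}_n$, so the conditional probability equals the unconditional one for a constant-composition codebook), establish that the asymptotic variance stays bounded away from zero using the full-support and distinguished-symbol hypotheses, and invoke a CLT. The difference is in the decomposition: the paper splits $q_n(X^n,Y^n)=\sum_x (n_x/n) A_x$ into $|\calX|$ independent per-symbol empirical averages $A_x$, restricts attention to the subset $\calX'$ of symbols with nondegenerate conditional metric, intersects the per-symbol deviation events and applies the classical i.i.d.\ CLT to each $A_x$ separately, arriving at the bound $\bigl(\Phi(-k\zeta^{-1/2}/\sigma)\bigr)^{|\calX'|}$. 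You instead treat the whole sum $\sum_i q(x_i,Y_i)$ as one row of a triangular array of independent, uniformly bounded, non-identically distributed variables and apply the Lindeberg--Feller CLT once. Your route is more economical, avoids the somewhat delicate intersection step, and identifies the limiting probability as a single Gaussian tail rather than a product of $|\calX'|$ tails; the paper's route uses only the i.i.d.\ CLT at the cost of that intersection manipulation.

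One small inaccuracy to fix: you set $V := P(x^*)V_{x^*}$ as a lower bound, which correctly shows $\liminf_n V_n>0$, but then write ``Combined with $V_n\to V$''. In fact $V_n=\sum_x \hat P_n(x) V_x \to V_\infty := \sum_x P(x) V_x \ge V$, and the limiting probability is $\Phi\bigl(-k/\sqrt{V_\infty}\bigr)$. Since $V_\infty>0$, the conclusion $\liminf_n \Pr\{\cdot\}>0$ is unaffected, but the constant you state is not the one actually obtained.
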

The proof of Lemma \ref{cl: Claim 1} is based on the CLT and appears in Appendix \ref{ap: Proof of Claim 1}.

\begin{lemma}\label{cl: Claim 2}
Let $X^n,Y^n$ be defined as in Lemma \ref{cl: Claim 1}, then for all $\epsilon>0$,  
\begin{flalign}\label{eq: claim 2 result}
&P_{margin}(W^{(n)},\calC_n,q_n,\epsilon)\nonumber\\
\leq& 
P_{thresh}(W^{(n)},\calC_n,(q_n,\tau))+ \Pr\left\{ q_n(X^n,Y^n) \leq \tau+\epsilon\right\}. 
\end{flalign}
\end{lemma}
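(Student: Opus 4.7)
The plan is to compare the error events of the two decoders pointwise for each realization $(x^n(i),y^n)$ with $i$ being the transmitted message, and then apply the union bound. The key observation is that correctness of the threshold decoder together with a sufficiently large value of the metric on the transmitted codeword forces correctness of the margin decoder, so the margin error event is contained in the union of the threshold error event and the ``small metric'' event on the right-hand side of \eqref{eq: claim 2 result}.

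More concretely, fix a realization of the channel and suppose message $i$ was transmitted, producing $y^n$. First I would write out the two decoding regions explicitly. The $\epsilon$-margin decoder errs iff there exists $j\neq i$ with
\[
q_n(x^n(i),y^n) < q_n(x^n(j),y^n) + \epsilon,
\]
while the $(q_n,\tau)$-threshold decoder correctly decodes iff $q_n(x^n(i),y^n)\geq \tau$ and $q_n(x^n(j),y^n) < \tau$ for all $j\neq i$.

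The key step is the following implication: if the threshold decoder correctly decodes and in addition $q_n(x^n(i),y^n) > \tau + \epsilon$, then for every $j\neq i$,
\[
q_n(x^n(j),y^n) + \epsilon < \tau + \epsilon < q_n(x^n(i),y^n),
\]
so the margin condition is satisfied and the margin decoder correctly decodes. Taking the contrapositive, a margin error implies either a threshold error or the event $\{q_n(X^n,Y^n)\leq \tau+\epsilon\}$ (the $\leq$ rather than $<$ only loosens the bound).

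Finally, I would average over the uniform choice of transmitted message and the channel output. By the union bound applied to the two events in the contrapositive, and using Definitions \ref{eq: P_e W calC_n q_n tau_n dfn} for $P_{margin}$ and $P_{thresh}$, the inequality \eqref{eq: claim 2 result} follows immediately. This proof is essentially a direct set-theoretic inclusion argument, so there is no serious obstacle; the only thing to be careful about is handling the boundary between strict and non-strict inequalities in the two decoding rules, which is what forces the $\leq \tau+\epsilon$ (rather than $<\tau+\epsilon$) on the right-hand side.
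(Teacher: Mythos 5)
Your proof is correct and follows essentially the same route as the paper's: both split on the threshold-error event $\calA_1\cup\calA_2$ and observe that on its complement a margin error forces $q_n(X^n,Y^n)\leq\tau+\epsilon$, whence the union bound. The only cosmetic difference is that you phrase the key step as a contrapositive implication (threshold-correct together with $q_n(X^n,Y^n)>\tau+\epsilon$ implies margin-correct), whereas the paper states the equivalent set inclusion directly and notes that only the sub-event $\max_{j\neq i}q_n(x^n(j),Y^n)<\tau$ of the threshold-correct event is actually needed.
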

The proof of Lemma \ref{cl: Claim 2} appears in Appendix \ref{ap: Proof of Claim 2}.

Equipped with the lemmas we can now prove the theorem. 
From the condition (\ref{eq: convergence rate of Phat}) and since $|\mathbb{E}_{\hat{P}_n\times W}(q(X,Y))- \mathbb{E}_{P\times W}(q(X,Y))|\leq  (q_{max}-q_{min})\cdot \|\hat{P}_n-P\|_1$, where $q_{min}=\min_{x,y:\; W(y|x)>0}q(x,y)$, $q_{max}=\max_{x,y:\; W(y|x)>0}q(x,y)$, we have for all $n$ sufficiently large, 
\begin{flalign}\label{eq: q what diff}
\left| \mathbb{E}_{P\times W}(q(X,Y))- \mathbb{E}_{\hat{P}_n\times W}(q(X,Y)) \right| \leq (q_{max}-q_{min})n^{-1/2}.
\end{flalign}
Therefore,  for all $n$ sufficiently large, 
\begin{flalign}
&\Pr\left\{ q_n(X^n,Y^n)<\mathbb{E}_{P\times W}(q(X,Y))\right\}\nonumber\\
\geq &\Pr\left\{ q_n(X^n,Y^n)<\mathbb{E}_{\hat{P}_n\times W}(q(X,Y)) -(q_{max}-q_{min})n^{-1/2}\right\} .
\end{flalign}

We conclude from (\ref{eq: claim 1 equation}) that 
\begin{flalign}
&\liminf_{n\rightarrow\infty}\Pr\left\{ q_n(X^n,Y^n)<\mathbb{E}_{P\times W}(q(X,Y))\right\}\nonumber\\
\geq & \liminf_{n\rightarrow\infty}\Pr\left\{ q_n(X^n,Y^n)<\mathbb{E}_{\hat{P}_n\times W}(q(X,Y)) -(q_{max}-q_{min})n^{-1/2}\right\}\nonumber\\
 >&0.
\end{flalign}
Since $\liminf_{n\rightarrow\infty}\Pr\left\{ q_n(X^n,Y^n)<\tau\right\}=0$ is a necessary condition for a vanishing probability of error in $\tau$ threshold decoding, 
we deduce that for a sequence of $P$-constant composition codebooks, the {\it constant} threshold level $\tau$ should be strictly lower than $\mathbb{E}_{P\times W}(q(X,Y))$. 
Hence, denote $\tau=\mathbb{E}_{P\times W}(q(X,Y))-2\epsilon$ where $\epsilon>0$,  
thus from (\ref{eq: claim 2 result}) we obtain
\begin{flalign}\label{eq: rejghsli uversgf}
&P_{margin}(W^{(n)},\calC_n,q_n,\epsilon)\nonumber\\
\leq& 
P_{thresh}(W^{(n)},\calC_n,(q_n,\tau))+ \Pr\left\{ q_n(X^n,Y^n) \leq \mathbb{E}_{P\times W}(q(X,Y))-\epsilon\right\}. 
\end{flalign}
We will next show that the right most term $\Pr\left\{ q_n(X^n,Y^n) \leq \mathbb{E}_{P\times W}(q(X,Y))-\epsilon\right\}$ vanishes as $n$ tends to infinity, which will imply that if $P_{thresh}(W^{(n)},\calC_n,(q_n,\tau))$ vanishes then so does $P_{margin}(W^{(n)},\calC_n,q_n,\epsilon)$. 

Define
\begin{flalign}\label{eq: Ax dfn}
n_x=&\sum_{i=1}^n 1\{X_i=x\} \nonumber\\
A_x= &\frac{1}{n_x}\sum_{i=1}^n 1\{X_i=x\} \cdot q(x, Y_i)
\end{flalign}
and note that 
\begin{flalign}\label{eq: Ax dfn 2}
q_n(X^n,Y^n)=\sum_{x=1}^{|\calX|}\frac{n_x}{n}A_x,
\end{flalign} 
and that $n_x$ is a deterministic quantity since all $x^n$'s lie in the same type-class.

By assumption, $P(x)>0,\forall x\in\calX$, from (\ref{eq: convergence rate of Phat}) we have that there exists $\zeta>0$ such that $A_x$ is a average of at least $n\zeta$ non deterministic i.i.d. random variables. 
This implies that $q_n(X^n,Y^n)$ is the weighted average of $|\calX|$ independent random variables $A_x,x\in\calX$, each of which is the average of at least $n\cdot \zeta $ non deterministic i.i.d, random variables. 
Thus, from (\ref{eq: convergence rate of Phat}) we obtain for sufficiently large $n$, 
\begin{flalign}\label{eq: this label}
&\Pr\left\{ q_n(X^n,Y^n) \leq \mathbb{E}_{P\times W}(q(X,Y))-\epsilon\right\}\nonumber\\
\stackrel{(a)}{\leq}  &\Pr\left\{ q_n(X^n,Y^n) \leq \mathbb{E}_{\hat{P}_n\times W}(q(X,Y))-\epsilon/2\right\}\nonumber\\
\stackrel{(b)}{\leq} & \Pr\left\{ \cup_{x=1}^{|\calX|} \left\{A_x \leq \mathbb{E}_{ W(\cdot|x)}(q(x,Y))-\epsilon/2\right\}\right\}\nonumber\\
\leq &\sum_{x=1}^{|\calX|} \Pr\left\{ A_x \leq \mathbb{E}_{ W(\cdot|x)}(q(x,Y))-\epsilon/2\right\},
\end{flalign}
where $(a)$ follows from (\ref{eq: q what diff}), $(b)$ follows from (\ref{eq: Ax dfn 2}) and since for a p.m.f.\ $\mu\in\calP(\calX)$, if $ \sum_{x=1}^{|\calX|} \mu(x)a_x\leq \sum_{x=1}^{|\calX|}\mu(x) b_x$, then there must exist at least one $x$ such that $a_x\leq b_x$, and the last step follows from the union bound. 
From the weak law of large numbers we have that the r.h.s.\ of (\ref{eq: this label})  vanishes as $n$ tends to infinity.
 Thus, from (\ref{eq: rejghsli uversgf}) it follows that if $P_{thresh}(W^{(n)},\calC_n,(q_n,\tau))$ vanishes as $n$ tends to infinity, then so does $P_{margin}(W^{(n)},\calC_n,q_n,\epsilon)$. 
Consequently, under the assumptions of Theorem \ref{th: thresh-margin theorem}, the supremum of the achievable rates with a constant margin decoder is at least as high as that of a constant threshold decoder and Theorem \ref{th: thresh-margin theorem} follows. 
\end{proof}
Note that in fact the proof of Theorem \ref{th: Cinfinity equals C margin} implies a stronger result, that is stated in the following corollary. 
\begin{corollary}\label{cr: thresh-margin theorem stronger}
Let $\delta_n$ be a vanishing sequence which satisfies $\lim_{n\rightarrow\infty}\delta_n n^{1/2}=\infty$. 
 If $P$ satisfies the conditions of Theorem \ref{th: thresh-margin theorem}, 
then for all $P$-constant composition sequences of codebooks, 
the supremum of the achievables rates with a $\delta_n$-margin decoder is at least as high as that of any time varying $\tau_n$ threshold decoder. 
\end{corollary}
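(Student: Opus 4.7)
The plan is to adapt the proof of Theorem \ref{th: thresh-margin theorem} to allow a time-varying threshold $\tau_n$ in the hypothesis and a vanishing margin $\delta_n$ in the conclusion, retaining its two ingredients: the CLT-based lower bound of Lemma \ref{cl: Claim 1} and the margin/threshold sandwich inequality of Lemma \ref{cl: Claim 2}. Suppose a rate $R$ is achieved on a $P$-constant composition sequence $\calC_n$ under the $\tau_n$-threshold decoder, so $P_{thresh}(W^{(n)},\calC_n,(q_n,\tau_n))\to 0$. Then exactly as in Theorem \ref{th: thresh-margin theorem}, the necessary condition $\Pr\{q_n(X^n,Y^n)<\tau_n\}\to 0$, combined with Lemma \ref{cl: Claim 1}, forces $\sqrt{n}\bigl(\mathbb{E}_{\hat P_n\times W}(q(X,Y))-\tau_n\bigr)\to\infty$.

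I would then apply Lemma \ref{cl: Claim 2} with threshold $\tau_n$ and margin $\epsilon=\delta_n$ to obtain $P_{margin}(W^{(n)},\calC_n,q_n,\delta_n) \leq P_{thresh}(W^{(n)},\calC_n,(q_n,\tau_n)) + \Pr\{q_n(X^n,Y^n)\leq \tau_n+\delta_n\}$. The first term vanishes by hypothesis, so everything reduces to showing the second term does. The same CLT/law-of-large-numbers reasoning used to prove Lemma \ref{cl: Claim 1} and to handle (\ref{eq: this label}) in the proof of Theorem \ref{th: thresh-margin theorem} gives $\Pr\{q_n(X^n,Y^n)\leq c_n\}\to 0$ whenever $\sqrt{n}\bigl(\mathbb{E}_{\hat P_n\times W}(q)-c_n\bigr)\to\infty$, so it suffices to verify this with $c_n=\tau_n+\delta_n$.

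The main obstacle is that two $\sqrt{n}$-normalized divergent sequences need not have a divergent difference: if $\mathbb{E}-\tau_n$ happens to be of the same order as $\delta_n$, then $\sqrt{n}(\mathbb{E}-\tau_n-\delta_n)$ can fail to blow up. To sidestep this I would invoke Lemma \ref{cl: Claim 2} not with $\tau_n$ itself but with the lowered threshold $\tau_n^\star:=\min\bigl(\tau_n,\,\mathbb{E}_{\hat P_n\times W}(q)-2\delta_n\bigr)$; this guarantees $\tau_n^\star+\delta_n\leq \mathbb{E}_{\hat P_n\times W}(q)-\delta_n$, so $\Pr\{q_n\leq\tau_n^\star+\delta_n\}\to 0$ by the CLT, using $\sqrt{n}\delta_n\to\infty$ and the convergence (\ref{eq: convergence rate of Phat}). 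The price to pay is verifying that $P_{thresh}(W^{(n)},\calC_n,(q_n,\tau_n^\star))\to 0$: since $\tau_n^\star\leq\tau_n$ only the false-positive term can grow, and the extra contribution comes from incorrect codewords whose metric with $Y^n$ falls in the narrow band $[\tau_n^\star,\tau_n]$ of width $O(\delta_n)$. Controlling the expected size of this contribution via the same type-dependent local CLT estimate that drives Lemma \ref{cl: Claim 1} is the delicate step; once completed one concludes $P_{margin}(W^{(n)},\calC_n,q_n,\delta_n)\to 0$ at rate $R$, which is precisely the corollary's assertion that the $\delta_n$-margin supremum dominates the $\tau_n$-threshold supremum over $P$-constant composition codebooks.
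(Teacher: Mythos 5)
Your identification of the obstacle is exactly right: knowing both $\sqrt{n}\bigl(\mathbb{E}_{\hat{P}_n\times W}(q(X,Y))-\tau_n\bigr)\to\infty$ and $\sqrt{n}\,\delta_n\to\infty$ does not give $\sqrt{n}\bigl(\mathbb{E}_{\hat{P}_n\times W}(q(X,Y))-\tau_n-\delta_n\bigr)\to\infty$, so plugging $\epsilon=\delta_n$ and $\tau=\tau_n$ directly into Lemma~\ref{cl: Claim 2} fails whenever $\tau_n$ sits in the band $\bigl(\mathbb{E}_{\hat{P}_n\times W}(q(X,Y))-2\delta_n,\,\mathbb{E}_{\hat{P}_n\times W}(q(X,Y))\bigr)$. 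However, your proposed repair via $\tau_n^\star=\min\bigl(\tau_n,\mathbb{E}_{\hat{P}_n\times W}(q(X,Y))-2\delta_n\bigr)$ does not close this hole: Lemma~\ref{cl: Claim 2} applied at threshold $\tau_n^\star$ needs $P_{thresh}(W^{(n)},\calC_n,(q_n,\tau_n^\star))\to 0$, but the hypothesis only controls $P_{thresh}$ at the given $\tau_n\ge\tau_n^\star$. Lowering a threshold can only increase the false-alarm probability $\Pr\{\max_{x^n\in\calC_n,\,x^n\neq X^n}q_n(x^n,Y^n)\ge\tau_n^\star\}$, and for an arbitrary deterministic $P$-constant-composition codebook nothing in the setup bounds this quantity; the ``type-dependent local CLT estimate'' you gesture at controls the transmitted-codeword term, not the maximum over exponentially many competing codewords. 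So the ``delicate step'' you flag is not a technicality you can defer---it is where the argument stops.

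For comparison, the paper's own proof disposes of this case in a single clause: it asserts the necessary condition~(\ref{eq: tau n must satisfy 2}) and then declares that ``without loss of generality one can set $\tau_n=\mathbb{E}_{P\times W}(q(X,Y))-2\delta_n$''. But~(\ref{eq: tau n must satisfy 2})---and even the stronger consequence $\sqrt{n}\bigl(\mathbb{E}_{\hat{P}_n\times W}(q(X,Y))-\tau_n\bigr)\to\infty$ that you correctly derived from Lemma~\ref{cl: Claim 1}---does not force $\tau_n\le\mathbb{E}_{P\times W}(q(X,Y))-2\delta_n$ eventually, since $\delta_n$ itself satisfies $\sqrt{n}\,\delta_n\to\infty$ and could dominate $\mathbb{E}-\tau_n$. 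The paper's ``without loss of generality'' is thus precisely the point you are worried about, and the paper does not spell out a justification for it; it effectively restricts attention to thresholds with $\tau_n\le\mathbb{E}_{P\times W}(q(X,Y))-2\delta_n$, under which the step around~(\ref{eq: this label}) carries over verbatim from Theorem~\ref{th: thresh-margin theorem}. Your proposal therefore follows the same route as the paper and lands on the same unfinished point, with the difference that you flag it explicitly rather than folding it into a WLOG.
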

\begin{proof}
The proof goes along the lines of the proof of Theorem \ref{th: thresh-margin theorem} where in (\ref{eq: rejghsli uversgf}) one can substitute $\epsilon$ with a sequence $\delta_n$ which vanishes more slowly than $O(n^{-1/2})$. 
This is because in fact \begin{flalign}\label{eq: tau n must satisfy 2}
\liminf_{n\rightarrow\infty} \frac{\tau_n -(\mathbb{E}_{P\times W}(q(X,Y))-n^{-1/2})}{n^{-1/2}}<0  
\end{flalign}
should hold, so without loss of generality one can set $\tau_n=\mathbb{E}_{P\times W}(q(X,Y)-2\delta_n $ and 
then from the CLT, we shall still have that the r.h.s.\ of (\ref{eq: this label}) vanishes as $n$ tends to infinity. 

\end{proof}

\subsection{$C_{q}^{(\infty)}(W)$ is the Highest Achievable Rate for a Sufficiently Fast Decay of the Average Probability of Error}

Let $\eta_n$ denote the smallest non-zero metric difference among all possible codewords and channel outputs; that is, 
\begin{flalign}\label{eq: eta dfn}
\eta_n=&\min_{x^n,\tilde{x}^n,y^n:\; q_n(\tilde{x}^n,y^n)\neq q_n(x^n,y^n)} \left| q_n(\tilde{x}^n,y^n)- q_n(x^n,y^n)\right|,
\end{flalign}
where the minimization is over triplets of $n$-vectors in $\calX^n\times\calX^n\times \calY^n$. 

Note that if the set $\left\{x^n,\tilde{x}^n,y^n:\; q_n(\tilde{x}^n,y^n)\neq q_n(x^n,y^n) \right\}$ is non-empty, $\eta_n>0$ (with strict inequality), and if it is empty for any $n$, this means that the mismatch capacity of the channel is zero, since any two sequences $x^n,\tilde{x}^n$ are indistinguishable using the metric $q$ for every channel output. 

The next theorem states that 
$C_{q}^{(\infty)}(W)$ is the highest achievable rate for sequences of codes whose average probability of error satisfies $\lim_{n\rightarrow\infty}\frac{1}{\eta_n}P_e(W^n,\calC_n,q_n)=0$. 
This
implies a soft converse as an important special case when the metric $q$ is rational; $C_{q}^{(\infty)}(W)$ is the highest achievable rate for code sequences whose average probability of error converges to zero faster than $O(1/n)$.  

\begin{theorem}\label{th: rational theorem}
For a metric $q$ which satisfies the boundedness assumption (\ref{eq: boundedness updated}), every code-sequence $\left\{\calC_n\right\}_{n\in \mathbb{N}}$ for which $\lim_{n\rightarrow\infty}\frac{1}{\eta_n}P_e(W^n,\calC_n,q_n)=0$ must satisfy 
\begin{flalign}
R=\mbox{limsup}_{n\rightarrow\infty} \frac{1}{n}\log|\calC_n| \leq C_q^{(\infty)}(W).
\end{flalign} 
In particular, if 
$q$ is a rational metric and $\lim_{n\rightarrow\infty} n\cdot P_e(W^n,\calC_n,q_n)=0$, then $R\leq C_q^{(\infty)}(W)$. 
\end{theorem}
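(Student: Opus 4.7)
The plan is to apply Lemma~\ref{lm: essential lemma} with the event $\calA$ taken as the mismatched-decoding error event, exploiting the fact that on $\calA^{c}$ the gaps $q_n(X^n,Y^n)-q_n(\bar x^n,Y^n)$ are strictly positive for every competitor, and hence automatically at least $\eta_n$ by the very definition \eqref{eq: eta dfn}. This turns the minimum of metric gaps appearing in Lemma~\ref{lm: essential lemma} into a deterministic lower bound, which is exactly the ingredient needed to mimic the parameter-balancing argument from the proof of Theorem~\ref{th: Cinfinity equals C margin}, with $\eta_n$ now playing the role of the margin $\delta$.

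First I would fix the code sequence $\{\calC_n\}$, let $R_n = \frac{1}{n}\log|\calC_n|$ and $\epsilon_n = P_e(W^n,\calC_n,q_n)$, take $P^{(n)}$ to be uniform on $\calC_n$, and let $\calA$ be the mismatched-decoding error event of \eqref{eq: decoder decision rule}, so that $\Pr(\calA)=\epsilon_n$. Under $\calA^{c}$ one has $q_n(X^n,Y^n) > q_n(\bar x^n,Y^n)$ for every $\bar x^n \in \calC_n \setminus\{X^n\}$; since any strictly positive value of $|q_n(\tilde x^n,y^n)-q_n(x^n,y^n)|$ is by definition no smaller than $\eta_n$, we obtain $\min_{\bar x^n \neq X^n}[q_n(X^n,Y^n)-q_n(\bar x^n,Y^n)] \geq \eta_n$ on $\calA^{c}$. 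Substituting this into Lemma~\ref{lm: essential lemma} and using \eqref{eq: DUAL EXPRESSION MODIFIED} (which is a maximum over $P^{(n)}$, hence lower-bounded by the choice above) gives, for every $s_n\geq 0$,
$$C_q^{(n)}(W) \geq -\epsilon_n\cdot\frac{s_n\cdot 2B}{n} + (1-\epsilon_n)\min\!\left\{R_n,\frac{s_n\eta_n}{n}\right\} - \frac{\log 2}{n}.$$
Balancing the two terms inside the minimum by choosing $s_n = nR_n/\eta_n$ (exactly as in Theorem~\ref{th: Cinfinity equals C margin}, with $\delta \rightsquigarrow \eta_n$) yields
$$C_q^{(n)}(W) \geq (1-\epsilon_n) R_n - \frac{2B\epsilon_n R_n}{\eta_n} - \frac{\log 2}{n}.$$
Since $R_n$ is bounded (by $\log|\calX|$) and $\epsilon_n/\eta_n\to 0$ by hypothesis, taking $\limsup_{n\to\infty}$ along a subsequence achieving $R$ yields $C_q^{(\infty)}(W) \geq R$.

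For the rational-metric special case, write $q(x,y)=p(x,y)/D$ with $D\in\mathbb{N}$ a common denominator of the finitely many values of $q$. Then $q_n(x^n,y^n) \in \frac{1}{nD}\mathbb{Z}$, so any two distinct values of $q_n$ differ by at least $1/(nD)$, giving $\eta_n \geq 1/(nD)$ and therefore $\epsilon_n/\eta_n \leq nD\epsilon_n$; the assumption $n\epsilon_n\to 0$ then implies the hypothesis of the general statement. The only step that requires care is the deterministic lower bound $\eta_n$ on the minimum metric gap under $\calA^{c}$; once this is in place, the proof is essentially the same parameter optimization as in Theorem~\ref{th: Cinfinity equals C margin}, and no further ingredient is needed.
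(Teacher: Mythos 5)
Your proof is correct and follows essentially the same route as the paper: you apply Lemma~\ref{lm: essential lemma} with $\calA$ equal to the mismatched-decoding error event, observe that on $\calA^c$ the strict positivity of the metric gaps forces them to be at least $\eta_n$, choose $s_n=nR_n/\eta_n$, and handle the rational case via the common-denominator bound $\eta_n\geq 1/(nD)$. The only cosmetic difference is that the paper phrases the argument via a supremum over families of codebooks and sequences $\tilde\epsilon_n$, whereas you work directly with the fixed code sequence, but the substance is identical.
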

Theorem \ref{th: rational theorem} constitutes a step forward towards verifying whether Csisz\'{a}r and Narayan's conjecture \cite{CsiszarNarayan95} holds; i.e., whether $C_q(W)=C_q^{(\infty)}(W)$. For rational metrics, it identifies $C_q^{(\infty)}(W)$ as the highest rate achievable with average probability of error which is $o(1/n)$. The gap can be closed if one could show that the average probability of error vanishes exponentially fast (or at least as fast as $o(1/n)$) at any rate below capacity. 
Since at rates below $C_q^{(\infty)}(W)$ exponential decay of the average probability of error is feasible \cite{CsiszarNarayan95}, the following corollary can be stated.
\begin{corollary}\label{th: rational corollary}
For a bounded rational metric $q$, $C_q(W)= C_q^{(\infty)}(W)$ iff for all $R<C_q(W)$ there exists a sequence of codebooks $\{\calC_n\}_{n\in \mathbb{N}}$, such that $\liminf_{n\rightarrow\infty}\frac{1}{n}\log|\calC_n|\geq R$ and $\liminf_{n\rightarrow\infty}-\frac{1}{n}\log P_e(W^n,\calC_n,q_n)>0$. 
\end{corollary}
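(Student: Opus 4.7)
The plan is to prove both directions of the iff statement by combining Theorem~\ref{th: rational theorem} with the already-cited fact from \cite{CsiszarNarayan95} that rates below $C_q^{(\infty)}(W)$ admit exponentially decaying error probability via random coding on product channels $W^k$.

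For the forward implication ($\Rightarrow$), I would assume $C_q(W) = C_q^{(\infty)}(W)$ and fix any $R < C_q(W)$. By definition \eqref{eq: C q infty dfn}, $C_q^{(\infty)}(W) = \mathrm{limsup}_{k\to\infty} C_q^{(k)}(W)$, so there exists some finite $k$ with $R < C_q^{(k)}(W)$. Then random coding applied to the product channel $W^k$ (using blocks of length $k$ as super-symbols and running the standard constant-composition ensemble over $\calX^k$) yields, for block length $n = k\ell$, codebooks $\calC_n$ of rate at least $R$ whose average probability of error decays exponentially in $\ell$, hence also exponentially in $n$ (by a factor $1/k$). A simple interpolation between block lengths (e.g.\ padding codewords with a fixed prefix of length $< k$) extends this to all $n$, delivering a code sequence with $\liminf \tfrac{1}{n}\log|\calC_n|\ge R$ and $\liminf -\tfrac{1}{n}\log P_e(W^n,\calC_n,q_n) > 0$, as required.

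For the reverse implication ($\Leftarrow$), I would invoke Theorem~\ref{th: rational theorem} directly. Fix any $R < C_q(W)$. By hypothesis there is a code sequence $\{\calC_n\}$ with $\liminf \tfrac{1}{n}\log|\calC_n|\ge R$ and $\liminf -\tfrac{1}{n}\log P_e(W^n,\calC_n,q_n) = \rho > 0$. Exponential decay of $P_e$ clearly implies $\lim_{n\to\infty} n\cdot P_e(W^n,\calC_n,q_n) = 0$. Since $q$ is a bounded rational metric, Theorem~\ref{th: rational theorem} applies and yields $R \le C_q^{(\infty)}(W)$. Taking the supremum over $R<C_q(W)$ gives $C_q(W) \le C_q^{(\infty)}(W)$. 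The opposite inequality $C_q^{(\infty)}(W) \le C_q(W)$ is immediate, since by \cite{CsiszarNarayan95} every $R < C_q^{(\infty)}(W)$ is already an $\epsilon$-achievable rate for every $\epsilon\in(0,1)$. Combining, $C_q(W) = C_q^{(\infty)}(W)$.

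I do not anticipate any significant obstacle: the corollary is essentially a bookkeeping consequence of Theorem~\ref{th: rational theorem} plus the known achievability of exponential error decay at rates below $C_q^{(\infty)}(W)$. The only mildly delicate point to spell out carefully is the block-length interpolation in the forward direction (ensuring the exponential rate of decay is preserved for all $n$, not only for $n$ divisible by $k$), which is standard.
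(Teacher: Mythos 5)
Your proof is correct and follows the same route the paper implicitly uses: the corollary is presented as a direct consequence of Theorem~\ref{th: rational theorem} (which gives $R\le C_q^{(\infty)}(W)$ under $o(1/n)$ decay of $P_e$, hence in particular under exponential decay) together with the cited fact that rates below $C_q^{(\infty)}(W)=\limsup_k C_q^{(k)}(W)$ admit exponentially decaying error by random coding over the product channel $W^k$. You merely spell out the block-length interpolation step, which the paper leaves tacit.
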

We now prove Theorem \ref{th: rational theorem}. 
\begin{proof}
Fix a vanishing sequence $\tilde{\epsilon}_n,n\geq1$. 
Let $\calP_U(\tilde{\epsilon}_n)$ stand for the set of $P^{(n)}\in\calP(\calX^n)$ which are uniform over a subset $\calC_n$ of $\calX^n$ 
and such that $P(W^{(n)},\calC_n,q_n)= \tilde{\epsilon}_n$; i.e., 
\begin{flalign}\label{eq: calPUTPX dfn copied}
&\calP_U(\tilde{\epsilon}_n)\nonumber\\
\triangleq& \left\{
P^{(n)}\in\calP(\calX^n):\; 
P^{(n)}(\tilde{x}^n) = P^{(n)}(x^n) \mbox{ if }P^{(n)}(x^n)\cdot P^{(n)}(\tilde{x}^n)>0, 
P(W^{(n)},\calC_n,q_n)= \tilde{\epsilon}_n\right\},
\end{flalign}
where $\calC_n$ stands for the support of $P^{(n)}$. 

Let $P^{(n)}\in \calP_U(\tilde{\epsilon}_n)$ be given, and denote by $\calC_n$ 
the support set of $P^{(n)}$, which will be referred to as a codebook. Denote again $R_n=\frac{1}{n}\log |\calC_n|$. 

Let $\widetilde{\calE}$ be the error event; that is, 
\begin{flalign}
\widetilde{\calE}=& \left\{(x^n,y^n):\; q_n(x^n,y^n)\leq \max_{\tilde{x}^n\in\calC_n:\tilde{x}^n\neq x^n}q_n(\tilde{x}^n,y^n) \right\}, 
\end{flalign}
and note that $\Pr(\widetilde{\calE})= P_e(W^{(n)},\calC_n,q_n)=\tilde{ \epsilon}_n$. 

Next we invoke Lemma \ref{lm: essential lemma} with $\calA=\widetilde{\calE}$. 
Recall (\ref{eq: DUAL EXPRESSION MODIFIED}), we obtain
\begin{flalign}
& \frac{1}{n}\sup_{s_n\geq 0, a(x^n)\geq 0} \mathbb{E}\left(\log\frac{e^{s_nq_n(X^n,Y^n)+a(X^n)}}{\sum_{\bar{x}^n}{P}^{(n)}(\bar{x}^n)e^{s_nq_n(\bar{x}^n,Y^n)+a(\bar{x}^n)}}\right)\nonumber\\
\geq &\sup_{s_n\geq 0}- \Pr\left\{\widetilde{\calE}\right\}\cdot \frac{s_n\cdot 2B}{n}  + \Pr\left\{\widetilde{\calE}^c\right\}\cdot  \mathbb{E}\left( \min\left\{R_n, \frac{s_n}{n}\cdot \min_{\bar{x}^n\in\calC_n:\; \bar{x}^n\neq X^n }\left[ q_n(X^n,Y^n)-q_n(\bar{x}^n,Y^n)\right]  \right\}\bigg |\widetilde{\calE}^c\right)\nonumber\\
&\quad-\frac{1}{n}\log(2)\nonumber\\
\geq& \sup_{s_n\geq 0} -\frac{\tilde{\epsilon}_ns_n\cdot 2B}{n}  +(1-\tilde{\epsilon}_n)\cdot \min\left\{R_n,\frac{s_n\eta_n}{n}\right\} -\frac{1}{n} \log2 
,
\end{flalign}
where the first inequality follows from (\ref{eq: main lemma result}) by substituting $\calA\Rightarrow \widetilde{\calE}$ and the second in equality follows by definition of $\eta_n$ (\ref{eq: eta dfn}) and by the fact that under $\widetilde{\calE}^c$, $q_n(X^n,Y^n)-\max_{\bar{x}^n\in\calC_n:\; \bar{x}^n\neq X^n }q_n(\bar{x}^n,Y^n)\geq \eta_n$. 

Finally we substitute\footnote{We choose $s_n=\frac{nR_n}{\eta_n}$ 
for the term $\min \{R_n,\frac{s_n\delta}{n}\}$ to yield $R_n$ which is our desired lower bound.} $s_n=\frac{nR_n}{\eta_n}$, and take the supremum over the sequences $\tilde{\epsilon}_1,\tilde{\epsilon}_2,...$ which satisfy $\lim_{n\rightarrow\infty} \frac{\tilde{\epsilon}_n}{{\eta}_n}=0$, 
 this yields 
\begin{flalign}\label{eq: threshthresh 2 copied}
C_q^{(\infty)}(W)\geq  
&\sup_{ \substack{ \{\tilde{\epsilon}_n\}_{n\geq 1}:\; \tilde{\epsilon}_n\geq 0 ,\\  \lim_{n\rightarrow\infty} \frac{\tilde{\epsilon}_n}{\eta_n}=0 }}\limsup_{n\rightarrow\infty} 
\max_{P^{(n)}\in \calP_U(\tilde{\epsilon}_n)}   -\frac{\tilde{\epsilon}_nR_n\cdot 2B }{\eta_n} + R_n.
\end{flalign}
Now, the r.h.s.\ of (\ref{eq: threshthresh 2 copied}) is nothing but the supremum of the rates for which $\lim_{n\rightarrow\infty} \frac{\tilde{\epsilon}_n}{\eta_n}=0 $ 
and this concludes the proof of the first statement of Theorem \ref{th: rational theorem}.

To prove the second statement of the theorem pertaining to rational metrics it suffices to prove that for a rational metric $q$ with denominator $D$ one has $\eta_n\geq \frac{1}{nD}$ as the following lemma states.
\begin{lemma}
If $q$ is a rational metric with denominator $D$, then 
\begin{flalign}
&\max_{\tilde{x}^n, x^n,y^n:\; q_n(\tilde{x}^n,y^n)\neq q_n(x^n,y^n)}|q_n(\tilde{x}^n,y^n)- q_n(x^n,y^n)|\geq\frac{1}{nD}. 
\end{flalign}
\end{lemma}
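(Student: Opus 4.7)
The plan is to exploit the arithmetic structure of a rational metric, which makes this essentially a one-line observation about quantization once the right representation is written down.

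First, I would unpack the definition: saying that $q$ is a rational metric with denominator $D$ means there exists an integer-valued function $p:\calX\times\calY\to\mathbb{Z}$ such that $q(x,y)=p(x,y)/D$ for every $(x,y)\in\calX\times\calY$. Substituting this into the additive form \eqref{eq: additive decoder decision rule}, for any pair $(x^n,y^n)\in\calX^n\times\calY^n$ we may write
\begin{flalign}
q_n(x^n,y^n)=\frac{1}{n}\sum_{i=1}^n q(x_i,y_i)=\frac{1}{nD}\sum_{i=1}^n p(x_i,y_i),\nonumber
\end{flalign}
so that $q_n(x^n,y^n)$ is always an integer multiple of $1/(nD)$.

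Consequently, for any triple $(\tilde x^n, x^n,y^n)\in\calX^n\times\calX^n\times\calY^n$, the difference $q_n(\tilde x^n,y^n)-q_n(x^n,y^n)$ equals $\frac{1}{nD}\sum_{i=1}^n\bigl[p(\tilde x_i,y_i)-p(x_i,y_i)\bigr]$, which is again an integer multiple of $1/(nD)$. Whenever this difference is nonzero, its absolute value is therefore bounded below by $1/(nD)$. Taking the minimum (or max, as stated) over all triples for which the difference is nonzero yields the claimed inequality, and plugging into \eqref{eq: eta dfn} gives $\eta_n\geq 1/(nD)$, which is precisely what Theorem~\ref{th: rational theorem} needs to convert the first statement (the general decay condition $P_e(W^n,\calC_n,q_n)=o(\eta_n)$) into the rational-metric statement ($P_e=o(1/n)$).

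There is no genuine obstacle here: the argument is purely algebraic and uses no information-theoretic machinery beyond the definition of an additive metric. The only subtlety worth flagging is to be careful that the quantization level is $1/(nD)$ rather than $1/D$, which comes from the factor of $1/n$ in the definition of $q_n$; this is where the $n$-dependence in the bound originates.
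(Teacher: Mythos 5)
Your proof is correct and uses essentially the same idea as the paper: represent each $q(x,y)$ as an integer over $D$, observe that $q_n$ values (and hence their differences) are integer multiples of $1/(nD)$, and conclude that any nonzero difference has magnitude at least $1/(nD)$. The paper phrases this slightly differently via a WLOG ordering and the step $\sum a_i \geq 1+\sum b_i$, but the content and the quantization-step observation are identical.
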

\begin{proof}
Since the metric is rational with denominator $D$, we can write
\begin{flalign}
q_n(\tilde{x}^n,y^n)=&
\frac{1}{n}\sum_{i=1}^n \frac{a_i}{D}\quad  ;\quad q_n(x^n,y^n)=\frac{1}{n}\sum_{i=1}^n \frac{b_i}{D},
\end{flalign}
where $a_i,b_i, i=1,...,n$, are integers. Thus, if $q_n(\tilde{x}^n,y^n)\neq q_n(x^n,y^n)$, assume w.l.o.g.\ that $q_n(\tilde{x}^n,y^n)> q_n(x^n,y^n)$, and we obtain
\begin{flalign}
\sum_{i=1}^n a_i >\sum_{i=1}^n b_i  .
\end{flalign}
Consequently, we deduce that 
\begin{flalign}
\sum_{i=1}^n a_i \geq 1+\sum_{i=1}^n b_i  ,
\end{flalign}
which yields
\begin{flalign}
\sum_{i=1}^n \frac{a_i}{nD} \geq \frac{1}{nD}+\sum_{i=1}^n \frac{b_i}{nD}  ,
\end{flalign}
and the desired result follows. 
\end{proof}
This concludes the proof of Theorem \ref{th: rational theorem}.
\end{proof}

\section{A Max-Min Upper Bound}\label{sc: Max-Min Upper Bounds}
We next present an upper bound on $C_q(W)$
which differs from the lower bound $C_q^{(\infty)}(W)$ (\ref{eq: C q infty dfn}) by the sets over which the inner minimizations are performed. 
\begin{theorem}\label{th: second theorem}
The mismatch capacity of the DMC $W$ with an additive metric $q$ satisfies
\begin{flalign}\label{eq: bound of Theorem 1 a}
&C_q(W)\leq  \liminf_{n\rightarrow\infty} \max_{P_{X^n}} \min_{\substack{P_{\widetilde{Y}^n|X^n }:\; \\ P_{\widetilde{Y}^n,q_n(X^n,\widetilde{Y}^n) }=P_{Y^n,q_n(X^n,Y^n) } } }
\frac{1}{n}  I(X^n;\widetilde{Y}^n),
\end{flalign}
where $P_{\widetilde{Y}^n,q_n(X^n,\widetilde{Y}^n) }$ is the joint distribution of $\widetilde{Y}^n$ and the (random) metric value $q_n(X^n,\widetilde{Y}^n)$ at the output of the channel $P_{\widetilde{Y}^n|X^n }$, and $P_{Y^n,q_n(X^n,Y^n) }$ is defined as for the channel $W^n$. 
\end{theorem}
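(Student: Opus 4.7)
The plan is to establish this converse via a simulation argument: for every alternative channel $\widetilde{W} = P_{\widetilde{Y}^n|X^n}$ satisfying the constraint in the theorem, I will show that the mismatched decoder attains \emph{exactly the same} error probability under $\widetilde{W}$ as under the true channel $W^n$. Applying Fano's inequality to the worst such $\widetilde{W}$ then yields the claimed max--min upper bound.

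Fix $R < C_q(W)$ and take a sequence of codes $\calC_n = \{x^n(1),\ldots, x^n(M_n)\}$ with $\frac{1}{n}\log M_n \to R$ and mismatched-decoding error probability $\epsilon_n \to 0$ under $W^n$. Let $P_{X^n}$ denote the uniform distribution on $\calC_n$, and define the decoder's decision region $D_m = \{y^n : q_n(x^n(m), y^n) > q_n(x^n(j), y^n),\; \forall j \neq m\}$. The key observation is that for any admissible $\widetilde{W}$ and every $y^n \in D_m$,
$$W^n(y^n \mid x^n(m)) \;=\; P_{\widetilde{Y}^n|X^n=x^n(m)}(y^n).$$
Indeed, writing the constraint $P_{\widetilde{Y}^n, q_n(X^n, \widetilde{Y}^n)}(y^n, t) = P_{Y^n, q_n(X^n, Y^n)}(y^n, t)$ explicitly and evaluating it at $(y^n, t^\star)$ with $t^\star = q_n(x^n(m), y^n)$, the indicator $\mathbf{1}\{q_n(x^n(j), y^n) = t^\star\}$ vanishes for every $j \neq m$ by the strict inequality that defines $D_m$. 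Only the $j=m$ term survives on each side of $\frac{1}{M_n}\sum_j (\cdot)\mathbf{1}\{q_n(x^n(j), y^n) = t^\star\}$, yielding the pointwise identity above. Summing over $y^n \in D_m$ gives $W^n(D_m \mid x^n(m)) = P_{\widetilde{Y}^n|X^n=x^n(m)}(D_m)$, so each message-conditional correct-decoding probability is the same under $W^n$ and $\widetilde{W}$, whence the mismatched decoder's total error probability under $\widetilde{W}$ is exactly $\epsilon_n$.

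Fano's inequality applied to channel $\widetilde{W}$, with uniform message $W$ on $\{1,\ldots,M_n\}$ and using that the mismatched decoder already achieves error probability $\epsilon_n$ under $\widetilde{W}$, gives
$$(1-\epsilon_n)\log M_n \;\leq\; I(X^n;\widetilde{Y}^n) + h(\epsilon_n).$$
This holds for every admissible $\widetilde{W}$, so we may minimize its right-hand side over the admissible set. Since $P_{X^n}$ is a specific distribution, we may further upper bound by maximizing over $P_{X^n}$:
$$(1-\epsilon_n)\frac{\log M_n}{n} \;\leq\; \max_{P_{X^n}} \min_{P_{\widetilde{Y}^n|X^n}} \frac{1}{n} I(X^n;\widetilde{Y}^n) + \frac{h(\epsilon_n)}{n}.$$
Taking $\liminf_{n\to\infty}$ with $\epsilon_n \to 0$ yields $R \leq \liminf_n \max_{P_{X^n}} \min_{P_{\widetilde{Y}^n|X^n}} \frac{1}{n}I(X^n;\widetilde{Y}^n)$, and since $R < C_q(W)$ is arbitrary the theorem follows.

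The main obstacle is the pointwise identification on $D_m$; this is what promotes a \emph{marginal} constraint on $(\widetilde{Y}^n, q_n(X^n, \widetilde{Y}^n))$ to \emph{pointwise} equality of conditional output distributions on the decoder's decision regions, and it hinges critically on the strictness of the inequality in (\ref{eq: decoder decision rule}), which forces the set of codewords attaining the maximum metric at $y^n \in D_m$ to be the singleton $\{x^n(m)\}$. Everything else is a routine application of Fano's inequality and passage to the limit.
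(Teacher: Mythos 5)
Your proof is correct, and its overall skeleton — replace $W^n$ by any auxiliary channel $\widetilde{W}$ inducing the same joint law of $(\widetilde{Y}^n,q_n(X^n,\widetilde{Y}^n))$, argue that this leaves the mismatched decoder's error probability unchanged, then apply Fano to $\widetilde{W}$ — is exactly the paper's. Where you diverge is in \emph{how} the error-probability invariance is established, and your route is genuinely different and more self-contained. The paper invokes a single-shot identity from an earlier companion paper, $P_e(W^{(n)},\calC_n,q_n)=\Pr\{-\tfrac{1}{n}\log\Phi_{q_n}<\tfrac{1}{n}\log M_n\}$ with $\Phi_{q_n}=\Pr\{q_n(\tilde X^n,Y^n)\ge q_n(X^n,Y^n)\mid X^n,Y^n\}$, and then observes that $\Phi_{q_n}$ is a deterministic function of $(Y^n,q_n(X^n,Y^n))$, so its distribution (and hence $P_e$) is preserved under the constraint. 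You instead derive a \emph{pointwise} identity $W^n(y^n\mid x^n(m))=P_{\widetilde Y^n\mid X^n=x^n(m)}(y^n)$ for every $y^n\in D_m$, by evaluating the marginal constraint at $(y^n,t^\star)$ with $t^\star=q_n(x^n(m),y^n)$ and using the strictness of the inequality defining $D_m$ to kill all $j\neq m$ terms. This is a nicer argument: it avoids any appeal to the Verd\'u--Han-type characterization, it makes the role of the strict-inequality decision rule completely transparent, and it in fact proves something slightly stronger (pointwise agreement of the conditional laws on decision regions, not just agreement of their total mass). The only costs are the ones you already flag implicitly: you silently identify $I(m;\widetilde{Y}^n)$ with $I(X^n;\widetilde{Y}^n)$ (data processing on $m\to X^n\to\widetilde{Y}^n$ gives the needed direction), and the final limiting step needs the standard double-limit bookkeeping ($\epsilon$-achievability for every $\epsilon>0$, then $\epsilon\downarrow 0$, $\delta\downarrow 0$), which you gloss over but which is routine. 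Neither is a gap.
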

\begin{proof}
The proof is similar to the proof of \cite[Theorem 5]{SomekhBaruch_general_formula_IT2015}. 

Let \begin{flalign}
\Phi_{q_n}\triangleq \mbox{Pr}\left\{q_n(\tilde{X}^n,Y^n)\geq q_n(X^n,Y^n)|X^n,Y^n\right\}
\end{flalign} 
where $X^n$ and $Y^n$ are the input and output channel vectors, respectively, and $\tilde{X}^n$ is independent of $(X^n,Y^n)$ and is distributed identically to $X^n$.

It was noted in \cite{SomekhBaruch_general_formula_IT2015} that for every codebook $\calC_n$, one has $P_e(W^n, \calC_n,q_n)=P_e(\widetilde{W}^{(n)}, \calC_n,q_n)$ for every channel $\widetilde{W}^{(n)}$ whose output $\widetilde{Y}^n$ shares the same joint law with $q_n(X^n,\widetilde{Y}^n)$ as that of $(Y^n,q_n(X^n,Y^n))$.
This is evident from the expression 
\begin{flalign}\label{eq: VerduHan UB here too}
&P_e(W^{(n)},\calC_n,q_n)\nonumber\\
=&\Pr\left\{-\frac{1}{n}\log \Phi_{q_n}< \frac{1}{n}\log M_n \right\},
\end{flalign}
which was derived in \cite[Lemma 1]{SomekhBaruch_general_formula_IT2015}.

The r.h.s.\ of (\ref{eq: VerduHan UB here too}) is identical for both channels $W^{(n)}$ and $\widetilde{W}^{(n)}$. 
Consequently, if $P_{\widetilde{Y}^n,q_n(X^n,\widetilde{Y}^n)}=P_{Y^n,q_n(X^n,Y^n)}$ for all $n$, the capacity of the channel $W^n$ is equal to that of the channel $\left\{\widetilde{W}^{(n)}\right\}_{n\in \mathbb{N}}$, whose mismatch capacity can in turn, be bounded essentially by the normalized mutual information $\frac{1}{n}I(X^n;\widetilde{Y}^n)$ yielding (\ref{eq: bound of Theorem 1 a}).

\end{proof}

\section{Conclusion}\label{sc: Conclusion}
In this work we considered a DMC with decoders that employ a mismatched metric. 
Finding a single-letter expression for the mismatch capacity of the DMC remains an open problem. Nevertheless, inspired by Csisz\'{a}r and Narayan's conjecture which states that the ``product-space" improvement of the random coding lower bound on the mismatch capacity $C_q^{(\infty)}(W)$ is the channel capacity, we studied this quantity. 
The significance of Csisz\'{a}r and Narayan's conjecture is threefold because if true, it implies that: (a) one can approach capacity by employing random coding schemes with product space input alphabets of increasing dimension, (b) the mismatch capacity is equal to the threshold capacity. (c) exponential decay of the average probability of error at rate $R-\epsilon$ for all $\epsilon>0$ is feasible.  

Several properties of $C_q^{(\infty)}(W)$ were proved in this paper. The first property is that $C_q^{(\infty)}(W)$ is equal to the mismatched capacity with a constant margin level. 
It is known \cite{CsiszarKorner81graph},\cite{Hui83} that $C_q^{(\infty)}(W)$ is upper bounded by the constant threshold capacity. In this paper we derived the opposite inequality for the case of $P$-constant compositions codes. We proved that the highest achievable rate with a constant margin decoder is equal to the highest achievable rate with a constant threshold decoder for sequences of $P$-constant compositions codes. Consequently, if such sequences of codes are used with constant threshold decoding, the highest achievable rate is upper bounded by $C_q^{(\infty)}(W)$. 
The second property states that if the average probability of error of a sequence of codebooks, $\epsilon_n$, converges  to zero sufficiently fast, in particular if $\epsilon_n/\eta_n$ vanishes (where $\eta_n$ is the minimal non-zero difference among all possible metric values between sequences of length $n$), 
 the rate of the code-sequence is upper bounded by 
$C_q^{(\infty)}(W)$. Consequently, in particular, we have shown that if $q$ is a bounded rational metric $\eta_n=O(1/n)$, and thus in this case, if the average probability of error converges to zero faster than $O(1/n)$, then $R\leq C_q^{(\infty)}(W)$. 
It therefore can be inferred that in this case if a sequence of codes of rate $R$ is known to achieve an average probability of error which is $o(1/n)$, there exists a sequence of codes operating at a rate arbitrarily close to $R$ with an average probability of error that vanishes exponentially fast. 
It would be an interesting problem to specify a lower bound on the convergence rate of  $\eta_n$ to zero as $n$ tends to infinity for non rational metrics, which will lead to more general result. 

Although we could not verify whether Csisz\'{a}r and Narayan's conjecture holds, the results obtained in this paper increase our understanding of the quantity $C_q^{(\infty)}(W)$ and constitute a step forward towards verifying whether the conjecture holds.

We concluded by deriving a max-min multi-letter upper bound on the mismatch capacity $C_q(W)$ which bears some resemblance to $C_q^{(\infty)}(W)$. While the upper bound does not match the known lower bounds, the proof technique may shed some light on the mismatch problem. 

\section{Acknowledgement}\label{sc: ack}
We would like to thank the anonymous reviewers for their invaluable comments and suggestions, which led to a significant simplification of the proofs and contributed to the conciseness thereof.  
\appendix

\subsection{Proof of Lemma \ref{cl: Claim 1}}\label{ap: Proof of Claim 1}

Recall the definitions of $n_x,A_x$ (\ref{eq: Ax dfn}) and recall (\ref{eq: Ax dfn 2}).
As mentioned above, $n_x$ is a deterministic quantity since all $x^n$'s lie in the same type-class. 
This implies that $q_n(X^n,Y^n)$ is the average of $|\calX|$ independent random variables $A_x,x\in\calX$, each of which is the average of at least $n\cdot \min_{x\in\calX}\hat{P}_n(x)$ i.i.d, random variables. By assumption, $P(x)>0,\forall x\in\calX$, which implies that there exists $\zeta>0$ such that $A_x$ is a average of at least $n\zeta$ i.i.d. random variables, and since the exists at least one symbol $x^*$ which satisfies $\max_{y,y'\in\calY }W(y|x^*)q(x^*,y)-W(y'|x^* ) q(x^*,y')\neq 0$, at least one of the random variables $A_x,x\in\calX$ is the average of $n\zeta$ i.i.d. random variables having a strictly positive variance. 
 Let $\calX'$ be the set of $x$'s such that the variance of $A_x$ is positive. 
We have for an arbitrary $x^n\in\calC_n$, 
\begin{flalign}\label{eq: chain of CLT}
&\Pr\left\{ q_n(X^n,Y^n)<\mathbb{E}_{\hat{P}_n\times W}(q(X,Y))-kn^{-1/2}\right\}\nonumber\\
\stackrel{(a)}{=}&\Pr\left\{\left. q_n(X^n,Y^n)<\mathbb{E}_{\hat{P}_n\times W}(q(X,Y))-kn^{-1/2}\right|X^n=x^n\right\}\nonumber\\
\stackrel{(b)}{\geq} &\Pr\left\{ \left. 
\bigcap_{x\in\calX'}
\left\{\frac{1}{n_x}\sum_{i=1}^n1\{X_i=x\} q(x,Y_i) <\mathbb{E}_{W(\cdot|x)}(q(x,Y))-kn^{-1/2}\right\}\right|X^n=x^n\right\}\nonumber\\
= &\underset{x\in\calX'}{\prod}\Pr\left\{ \left.\frac{1}{n_x}\sum_{i=1}^n1\{X_i=x\} q(x,Y_i) <\mathbb{E}_{W(\cdot|x)} (q(x,Y))-kn^{-1/2}\right|X^n=x^n\right\}. 
\end{flalign}
where $(a)$ follows since $\calC_n$ is a constant composition, and $(b)$ follows since if for all $x$, $ a_x\leq  b_x$
then for any p.m.f.\ $\mu\in\calP(\calX)$,  $\frac{1}{|\calX|} \sum_{x=1}^{|\calX|} a_x\leq \frac{1}{|\calX|}\sum_{x=1}^{|\calX|} b_x$,

Now, given $X^n=x^n$, $A_x$ is the average of $n_x$ positive-variance i.i.d.\ random variables with expectation $\mathbb{E}_{W(\cdot|x)} (q(x,Y))$.

For $\ell$ i.i.d.\ random variables $Z_i$ whose expectation and variance are $\mu$ and $\sigma^2$, respectively, we have from the CLT 
\begin{flalign}
&\lim_{\ell\rightarrow\infty}\Pr\left\{\frac{1}{\ell}\sum_{i=1}^{\ell} Z_i\leq \mu- k(\ell\cdot c)^{-1/2} \right\}\nonumber\\
=&\lim_{\ell\rightarrow\infty} \Pr\left\{\frac{1}{\sqrt{\ell}\sigma}\sum_{i=1}^{\ell} ( Z_i-\mu)\leq -kc^{-1/2}/\sigma \right\}\nonumber\\
=& \Phi(-kc^{-1/2}/\sigma)>0,
\end{flalign}
where $\Phi(\cdot)$ is the cumulative distribution function of the standard Gaussian distribution.

Applying this to (\ref{eq: chain of CLT}) we obtain 
\begin{flalign}\label{eq: chain of CLT 2}
&\Pr\left\{ q_n(X^n,Y^n)<\mathbb{E}_{\hat{P}_n\times W}(q(X,Y))-kn^{-1/2}|X^n=x^n\right\}\nonumber\\
\geq  & \left(\Phi(-k\zeta^{-1/2}/\sigma) \right)^{|\calX'|}>0,
\end{flalign}
where $\sigma^2$ is the positive variance mentioned above. 

Therefore, 
\begin{flalign}
\lim_{n\rightarrow\infty} \Pr\left\{ q_n(X^n,Y^n)<\mathbb{E}_{\hat{P}_n\times W}(q(X,Y))-kn^{-1/2}\right\}> 0, 
\end{flalign}
which concludes the proof of Lemma \ref{cl: Claim 1}.

\subsection{Proof of Lemma \ref{cl: Claim 2}}\label{ap: Proof of Claim 2}

First, note that 
\begin{flalign}
&P_{margin}(W^{(n)},\calC_n,q_n,\epsilon)\nonumber\\
= &\Pr\left\{\max_{ x^n\in\calC_n:\; x^n\neq X^n} q_n(x^n, Y^n) \geq q_n(X^n,Y^n)-\epsilon\right\}. 
\end{flalign}
Next, define the events $\calA_1=\left\{\max_{ x^n\in\calC_n:\; x^n\neq X^n}q_n(x^n, Y^n) \geq\tau \right\}$ and $\calA_2=\left\{q_n(X^n, Y^n) <\tau \right\}$, whose union is the error event of the threshold decoder; i.e., $P_{thresh}(W^{(n)},\calC_n,(q_n,\tau))=\Pr\left\{ \calA_1\cup \calA_2 \right\}$. 
We have, 
\begin{flalign}\label{eq: claim 2 result again}
&\Pr\left\{\max_{ x^n\in\calC_n:\;  x^n\neq X^n,}q_n(x^n, Y^n) \geq q_n(X^n,Y^n)-\epsilon\right\}\nonumber\\
=&\Pr\left\{\left\{\max_{ x^n\in\calC_n:\;  x^n\neq X^n} q_n(x^n, Y^n) \geq q_n(X^n,Y^n)-\epsilon\right\}\cap (\calA_1\cup \calA_2)\right\}\nonumber\\
&+\Pr\left\{\left\{\max_{ x^n\in\calC_n:\;  x^n\neq X^n} q_n(x^n, Y^n) \geq q_n(X^n,Y^n)-\epsilon\right\}\cap  (\calA_1\cup \calA_2)^c\right\}\nonumber\\
\leq &\Pr\left\{ \calA_1\cup \calA_2 \right\}\nonumber\\
&+\Pr\left\{\left\{\max_{x^n\in\calC_n:\;  x^n\neq X^n}q_n(x^n, Y^n) \geq q_n(X^n,Y^n)-\epsilon\right\}\cap  \calA_1^c\right\}\nonumber\\
\leq &P_{thresh}(W^{(n)},\calC_n,(q_n,\tau))+ \Pr\left\{ q_n(X^n,Y^n) \leq \tau+\epsilon\right\},
\end{flalign}
where the last step follows since
\begin{flalign}
&\left\{\exists x^n\in\calC_n:\;  x^n\neq X^n, q_n(x^n, Y^n) \geq q_n(X^n,Y^n)-\epsilon\right\}\cap  \calA_1^c\nonumber\\
=& \left\{\max_{ x^n\in\calC_n:\;  x^n\neq X^n}q_n(x^n, Y^n) \geq q_n(X^n,Y^n)-\epsilon\right\}\cap  
\left\{\max_{x^n\in\calC_n:\; x^n\neq X^n}q_n(x^n, Y^n) <\tau \right\}  \nonumber\\
\subseteq &\left\{ q_n(X^n,Y^n) \leq \tau+\epsilon\right\},
\end{flalign}
proving Lemma \ref{cl: Claim 2}.

% Generated by IEEEtran.bst, version: 1.14 (2015/08/26)

\end{document}